\theoremstyle{plain}
\newtheorem{theorem}{Theorem}[section]
\newtheorem{lemma}{Lemma}[section]
\newtheorem{corollary}{Corollary}[section]
\newtheorem{example}{Example}[section]
\newcommand{\prf}{\begin{proof}{Proof}}
 \def\a{{\alpha}}
 \def\b{{\beta}}                         
\def\C{{\cal C}}                                        
 \def\d{{\delta}}      
                 \def\e{{\varepsilon}}                   
 \def\g{{\gamma}}                                                  
 \def\h{{\eta}}
 \def\l{{\lambda}}     
                 \def\n{{\nu}}                                                        
 \def\o{{\omega}}
 \def\q{{\theta}}                       
\def\R{{\cal R}} 				\def\RR{\mathbb{R}}
                 \def\s{{\sigma}}       
                 \def\t{{\tau}}
\def\ra{\rightarrow}
\def\dotfil{\leaders\hbox to 1em{\hss.\hss}\hfill}
\def\hexnumber#1{\ifcase#1 0\or 1\or 2\or 3\or 4\or 5\or 6\or 7\or 8\or 9\or A\or B\or C\or D\or E\or F\fi}
\def\BL{{\rm B\kern -0.5pt L}}
\def\UC{{\rm U\kern-0.5pt C}}
\def\argmax{\mathop{\rm argmax}}
\def\implies{\hbox{ $\Rightarrow$ }}
\def\tdot#1{\kern1.2pt\dot{\vphantom{#1}}\kern-1.2pt
        \dot#1\kern0.8pt\dot{\vphantom{#1}}\kern-0.8pt}
\def\E{\mathord{\rm E}}
\def\Pr{\mathord{\rm P}}
\def\var{\mathop{\rm var}\nolimits}
\mathchardef\given="626A
\def\generalweak#1{\ {\mathchoice{\buildrel #1\over \rightsquigarrow}%
{\raise-2pt\hbox{$\buildrel #1\over \rightsquigarrow$}}{}{}}\ }
\def\generalprob#1{\ {\mathchoice{\raise-1.5pt\hbox{$\buildrel#1\over\ra$}}
{\raise-2pt\hbox{$\buildrel#1\over\ra$}}{}{}}\ }
\def\prob{\generalprob{\small P}}
\def\Pgg{\ {\mathchoice{\raise-1.5pt\hbox{$\buildrel {\small P}\over\gg$}}
{\raise-2pt\hbox{$\buildrel {\small P}\over\gg$}}{}{}}\ }
\def\boxit#1{\vbox{\hrule\hbox{\vrule\kern1pt\vbox{\kern1pt #1\kern1pt}\kern1pt\vrule}\hrule}}
\def\eet#1{}
\def\th{\theta}
\newcommand{\raff}[1]{\renewcommand{\arraystretch}{#1}}
\begin{document}

\begin{frontmatter}
\title{Bayesian tolerance regions, with an application to linear mixed models}
\runtitle{Bayesian tolerance regions}

\begin{aug}
\author{\fnms{X. Gregory} \snm{Chen\thanksref{t1}} \ead[label=e1]{xiangyi.gregory.chen@msd.com}}
\and \author{\fnms{A.W.} \snm{van der Vaart\thanksref{t2}}\ead[label=e2]{a.w.vandervaart@tudelft.nl}}
\thankstext{t1}{Biostatistics and Research Decision Sciences, MSD, }
\thankstext{t2}{DIAM, Delft Institute of Applied Mathematics. The research leading to these results has received funding from the European Research Council under ERC Grant Agreement 320637 and 
is partly financed by the NWO Spinoza prize awarded to  A.W. van der Vaart
 by the Netherlands Organisation for Scientific Research (NWO).}

\runauthor{X. Chen and A.W. van der Vaart}

\end{aug}

\begin{abstract}
We review and contrast frequentist and Bayesian definitions of tolerance regions.
We give conditions under which for large samples a Bayesian region also has frequentist validity,
and study the latter for smaller samples in a simulation study. 
We discuss a computational strategy for computing a Bayesian two-sided
tolerance interval for a  Gaussian future variable, and
apply this to the case of possibly unbalanced linear mixed models.
We illustrate the method on a quality control experiment from the pharmaceutical industry.
\end{abstract}

\begin{keyword}
\kwd{quality control}
\kwd{tolerance}
\kwd{linear mixed model}
\end{keyword}

\end{frontmatter}


\section{Introduction}
The concept of \emph{tolerance region} is of central importance in quality control. A tolerance region is 
a prediction set for a future observation,  which takes account of both
the random nature of this observation and the uncertainty about its distribution. 
It incorporates the statistical error of estimation of unknown parameters of this distribution.

Conventional tolerance regions take the uncertainty of estimated parameters into account in one of two ways.
Either the region captures the future observation a fraction of $1-\a$ times on average over both future and past
observations (the \emph{$(1-\a)$-expectation tolerance region}), 
or the region captures the future observation with probability at least $1-\d$ with $1-\a$ confidence
over past observations (the \emph{$(\d,\a)$-tolerance region}).  (We give precise definitions
in Section~\ref{SectionDefinitions}.) 
The second way appears to be preferred in the pharmaceutical industry.
The ``on average" and ``confidence" can refer to the sampling distribution of the data in a frequentist  sense,
but can also refer to a posterior distribution in the Bayesian statistical framework.
The main focus on the present paper is the second, but we do relate it to the frequentist setup.


Frequentist tolerance regions have been well studied in the literature.  A general reference is the book
\cite{krishnamoorthy2009}, especially for the situation that the data are i.i.d.  For the linear mixed model (LMM), the paper
\cite{Gaurav2012} provides an elegant solution to build one- and two-sided $(\d,\a)$-tolerance intervals, 
and includes a comprehensive review of the literature. One purpose of the present paper is to
provide a Bayesian approach for the general LMM.

The Bayesian formulation of tolerance regions dates back to at least 1964, 
but the subsequent literature is relatively small.  In his paper \cite{AitchisonBTR}, Aitchison derived 
Bayesian $(\d,\a)$-tolerance regions from a decision-theoretic framework, and contrasted them to the
frequentist counterparts.  In \cite{AitchisonBTR2} he extended his discussion to $(1-\a)$-tolerance
regions, which are a natural way to build Bayesian prediction intervals.  

A one-sided $(\d,\a)$-tolerance interval for a univariate future observation is usually easy to
compute, but two-sided tolerance intervals pose challenges, both conceptually and computationally.
There are at least two common approaches: intersecting two one-sided tolerance intervals, or fixing
one degree of freedom of the interval (e.g.\ the midpoint of the interval).  The former approach is
identical to specifying probability masses in the two tails of the distribution of the future
variable separately and gives a valid construction, in view of Bonferroni's inequality, but it yields
longer intervals than necessary.  (They are called ``anti-conservative'' in the pharmaceutical industry in reference to
the customers, whereas statisticians use the term ``conservative''.) See \cite{Hamada2002} for an example
application.  The second approach, fixing one degree of freedom, is the conventional choice, especially in the frequentist
framework, but requires untangling the dependence of the interval on the unknown true parameter.
Solutions are often not available in analytical form and computationally more challenging.
Wolfinger in \cite{wolfinger} proposed an algorithm to derive a two-sided Bayesian interval for a future
normal variate, which was refined by Krishnamoorthy and Mathew \cite{krishnamoorthy2009}.
Their algorithms have been widely adopted in practice, and also in other literature
(e.g.\cite{Katki2005}, \cite{Merwe2006}, \cite{Merwe2007}). 

Other contributions to the literature include the following:
\cite{Miller1989} used the empirical Bayes method to construct a one-sided tolerance
interval given an i.i.d. sample from a normal distribution;
\cite{Hamada2004} derived a Bayesian tolerance interval that contains a proportion of observations with a specified confidence;
\cite{Easterling1970} and \cite{Young2016} focused on the sample size needed to attain a certain accuracy;
\cite{Merwe2006} and \cite{Merwe2007} allowed data from the unbalanced one-way random effects model
and the balanced two-factor nested random effects model;
\cite{Mukerjee2001} discussed probability matching priors (PMP) in the one-sided case to ensure second-order frequentist
validity;
\cite{DPRMSHong2014} extended this to the two-sided case;
\cite{DPSHOng2014} incorporated it to a balanced one-way random effects model, 
and evaluated its performance against the frequentist method MLS in \cite{krishnamoorthy2009}.

Although the PMP approach has merit when the sample size is
small, it is analytically demanding even when data are i.i.d., and it seems difficult to extend to
the general LMM setting. The algorithms of Wolfinger \cite{wolfinger} and Krishnamoorthy and Mathew \cite{krishnamoorthy2009} 
can be extended to LMM, but they oversimplify the target function during optimization 
and may result in less satisfactory performance.

In this paper we propose a computationally efficient solution for the general case that the 
future observation possesses a normal distribution. We show that this
is easy to implement given any data model for which a sample from the posterior distribution is available. 
We investigate when the shortest interval is centered at the posterior mean of the parameter.
We discuss the interval in particular for the linear mixed model, and within this context show its good performance by simulation. 
We illustrate the method on an  example that is representative for pharmaceutical applications.
Finally we also prove that the Bayesian interval has frequentist validity in the case of large samples.

\section{Definitions and setup}
\label{SectionDefinitions}
Given are observed data $X$, with a distribution $P_\q$ depending on a parameter $\q$, 
and future unobserved ``data'' $Z$ , with a distribution $Q_\q$ depending on the
same parameter $\q$. In both cases the sample space is arbitrary.
A tolerance region is a set $\R(X)$ in the sample space of $Z$
that captures $Z$ with a ``prescribed probability''. It will typically 
be constructed using the observation $X$ to overcome the problem that  $\q$, and
hence the law of $Z$, is unknown.
There are various ways to make the ``prescribed probability'' precise, and these can be
divided into frequentist and Bayesian definitions. The probability statement
will refer to both $X$ and $Z$, and is fixed by one or two parameters $\a$ and $\d$, 
which are typically chosen small, e.g. 5\%. 

The parameter $\q$ will typically be chosen to identify the distribution of $Z$. The distribution
of $X$ may also depend on unknown ``nuisance'' parameters. 
For simplicity of notation we do not make this explicit in the following.
We shall use the notation $\Pr$ or $\Pr_\q$ for general probability statements, 
which may be reduced to $P_\q$ or $Q_\q$ if the event involves only $X$ or $Z$.

\subsection{Frequentist definitions}
The most common frequentist definition is the $(\d,\a)$-tolerance region. For a set
$R$, abbreviate $Q_\q(R)=\Pr_\q(Z\in R)$. Then $\R(X)$ is an \emph{$(\d,\a)$-tolerance region} if
\begin{equation}
\label{EqFreqadTol}
P_\q\Bigl(x: Q_\q\bigl(\R(x)\bigr)\ge 1-\d\Bigr)\ge1-\a,\qquad \forall \q.
\end{equation}
If we let $Q_\q\bigl(\R(X)\bigr)$ denote the probability of $\R(X)$ under $Q_\th$, for $X$ held
fixed, then we can also write the display in the shorter form
$\Pr_\q\bigl(Q_\q\bigl(\R(X)\bigr)\ge 1-\d\bigr)\ge1-\a$, where
the outer probability $\Pr_\q$ refers to $X$, and the inequality must hold for all
possible values of the parameter $\q$. The latter reminds us of the definition of confidence sets,
and indeed it can be seen that \emph{$\R(X)$ is a frequentist \emph{$(\d,\a)$-tolerance region} if and only if
the set $\C(X)=\{\q: Q_\q\bigl(\R(X)\bigr)\ge 1-\d\}$ is a confidence set for $\q$ of confidence level $1-\a$.}

An alternative is the \emph{$\a$-expectation tolerance region}, which requires that
\begin{equation}
\label{EqFreqaTol}
\int Q_\q\bigl(\R(x)\bigr)\,dP_\q(x)\ge 1-\a,\qquad \forall \q.
\end{equation}
With the notational convention as before, the display can be written in the shorter form
$\E_\q Q_\q\bigl(\R(X)\bigr)\ge 1-\a$, which is again required for all possible parameter values.

Both definitions have the form of requiring that 
$\E_\q\ell\bigl[Q_\q\bigl(\R(X)\bigr)\bigr]\ge 1-\a$, for all $\q$, and some
given loss function $\ell$. In the two cases this loss function
is given by $\ell(q)=1\{q\ge 1-\d\}$ for \eqref{EqFreqadTol}, and $\ell(q)=q$ for \eqref{EqFreqaTol}, 
respectively, where $1\{A\}$ is the indicator function of a set $A$.

\subsection{Bayesian definitions}
In the Bayesian setup the parameter $\q$ is generated from a prior distribution $\Pi$,
and the densities $p_\q$ and $q_\q$ are the conditional densities of $X$ and $Z$
given $\q$, respectively. To proceed, it is necessary to make further assumptions that
fix the joint law of $(\q, X, Z)$. The typical assumption is that $X$ and $Z$ are independent given
$\q$.

A natural Bayesian approach is to refer to the predictive distribution of $Z$, and define
a tolerance region $\R(X)$ to be a set such that $\Pr\bigl(Z\in\R(X)\given X\bigr)\ge1-\a$,
i.e. a credible set in the posterior law of $Z$ given $X$. The inequality can be written in terms of the
posterior distribution $\Pi(\cdot\given X)$ of $\q$ given $X$ as
$$\int\Pr\bigl(Z\in \R(X)\given X,\q\bigr)\,d\Pi(\q\given X)\ge1-\a.$$
Under the conditional independence assumption this becomes 
\begin{equation}
\label{EqBayesianaTol}
\int Q_\q\bigl(\R(X)\bigr) \,d\Pi(\q\given X)\ge 1-\a.
\end{equation}
This is like a frequentist $\a$-expectation tolerance region \eqref{EqFreqaTol}, 
but with the expectation with respect
to $X$ under $P_\q$ replaced by the expectation 
with respect to $\q$ under the posterior distribution.

An alternative, derived from a utility analysis by Aitchison \cite{AitchisonBTR}, is the \emph{Bayesian
$(\d,\a)$-tolerance region}, which is a set $\R(X)$ such that
\begin{equation}
\label{EqBayesianadTol}
\Pi\Bigl(\q: Q_\q\bigl(\R(X)\bigr)\ge 1-\d\given X\Bigr)\ge1-\a.
\end{equation}
This may be compared to \eqref{EqFreqadTol}. We can also say that 
\emph{$\R(X)$ is a Bayesian \emph{$(\d,\a)$-tolerance region} if and only if
the set $\C(X)=\{\q: Q_\q\bigl(\R(X)\bigr)\ge 1-\d\}$ is a credible set at level $1-\a$.}

Both types of Bayesian regions satisfy $\int \ell\bigl[Q_\q\bigl(\R(X)\bigr)\bigr]\,d\Pi(\q\given X)\ge 1-\a$, 
for the appropriate loss function $\ell$. Solving the region $\R(X)$ from such an equation
may seem daunting, but good approximations may be easy to obtain using stochastic simulation.
This is true even for complicated data models, as long as one is able to generate a sample from the posterior
distribution given $X$, for instance by implementing an MCMC procedure. 
We make this concrete in Section~\ref{SectionComputation} for a Gaussian variable $Z$, and
illustrate this in Section~\ref{SectionLMM} for an unbalanced linear mixed model (LMM).

\subsection{Comparison}
The frequentist and Bayesian definitions differ in the usual way in 
that the frequentist probabilities in \eqref{EqFreqaTol} and \eqref{EqFreqadTol} refer to the possible values of $x$ in the sample
space, whereas the Bayesian probabilities in \eqref{EqBayesianaTol} and \eqref{EqBayesianadTol} condition on the observed value
of $X$ and refer to the distribution of the parameter. 

As is the case for credible sets versus confidence sets,
the Bayesian approach may feel more natural.

An advantage of the Bayesian approach
is that while the form of the tolerance set $\R(X)$ in \eqref{EqBayesianadTol} is determined by the variable
$Z$, through the prediction problem $Q_\q$, the model for the data $X$ enters only through the
posterior distribution $\Pi(\q\in\cdot\given X)$. If in the former the dependence 
on the parameter $\q$ is not too complicated, then the problem is solvable for even complicated 
data models. In contrast, the frequentist problem permits explicit solutions only in very special
cases, although approximations and asymptotic expansions may extend their use (see \cite{Gaurav2012}). 

Neither the frequentist nor the Bayesian formulation restrains the shape of the region $\R(x)$.
One may prescribe a fixed form and/or seek to optimize the shape with respect to an additional criterion,
such as the volume of the region. The Bayesian formulation is again easier to apply, as the optimization 
will be given the data $X$. In the case of frequentist region it may be necessary to optimize an expected
quantity instead.

In general the two approaches give difference tolerance regions, but the difference may disappear in the large sample limit. 
The requirements of the frequentist and Bayesian \emph{tolerance regions $\R(X)$ for loss function $\ell$ and level $\a$}, 
can be given symmetric formulations, as:
\begin{align}
\label{EqFreqellTolerance}
\E\Bigl(\ell\bigl[Q_\q\bigl(\R(X)\bigr)\bigr]\given \q\Bigr)&\ge 1-\a,\qquad\forall\q,\\
\E\Bigl(\ell\bigl[Q_\q\bigl(\R(X)\bigr)\bigr]\given X\Bigr)&\ge 1-\a.
\label{EqBayesianellTolerance}
\end{align}
In the first the expectation is taken with respect to $X$, which gives an integral over
$x$ with respect to the density $p_\q$. In the second the expectation is relative to
$\q$, which leads to an integral relative to the posterior distribution given $X$. The integral of
the first relative to the prior is identical to the integral of the second
relative to the Bayesian marginal distribution of $X$, but there is no reason that \eqref{EqFreqellTolerance}
implies \eqref{EqBayesianellTolerance} or vice versa. In particular, 
a Bayesian tolerance region need not be a frequentist tolerance region.

However, Bayesian and frequentist inference typically merge if the informativeness
in the data tends to a limit. For instance, this is true for regular parametric models
in the sense that Bayesian  credible sets are frequentist confidence sets, in the limit, with
corresponding levels. The prior is then washed out 
and the Bayesian credible sets are equivalent to confidence sets based on the maximum
likelihood estimator. This equivalence extends to tolerance regions,
under some conditions. We defer a discussion to Section~\ref{SectionFrequentistJustification}.

\subsection{One-sided and two-sided tolerance intervals}
\label{OneTwoSidedInterval}
For a one-dimensional future variable $Z$ it is natural to choose
$\R(x)$ an interval in the real line. The endpoints of such an interval are referred to as \emph{tolerance limits}.

The single finite tolerance limit of a Bayesian \emph{one-sided interval} 
is determined by meeting the $(\d,\a)$- or $\a$-tolerance criterion.
The pair of tolerance limits of a Bayesian \emph{two-sided interval} might be optimized to 
give an interval of minimal length, next to requiring that the tolerance criterion is met.

One-sided tolerance limits possess a straightforward interpretation and implementation.
In particular, the $(\d,\a)$-type has a simple description in terms of confidence intervals
and posterior quantiles:
\begin{compactitem}
\item $(-\infty, U(X)]$ is a frequentist {$(\d,\a)$-tolerance interval} if and only if
it is a $(1-\a)$-{confidence interval} for the induced parameter
$Q_\q^{-1}(1-\d)$; it is a Bayesian {$(\d,\a)$-tolerance interval} if $U(X)$ is the 
$(1-\a)$-quantile of the posterior distribution of  $Q_\q^{-1}(1-\d)$ given $X$.
\item $[L(X),\infty)$ is a frequentist {$(\d,\a)$-tolerance interval}
if and only if it is a $(1-\a)$-confidence interval for $Q_{\q+}^{-1}(\d)$;
it is a Bayesian {$(\d,\a)$-tolerance interval} if $L(X)$ is the $(1-\a)$-quantile of the posterior distribution 
of $Q_{\q+}^{-1}(\d)$.
\end{compactitem}
Here $Q_\q^{-1}(u)=\inf\{z:Q_\q(-\infty,z]\geq u\}$ is the usual quantile function of $Z$,
and $Q_{\q+}^{-1}(u)$ is the right-continuous version of this quantile function. 
(The distinction between the two is usually irrelevant, and linked to the arbitrary convention of including the
boundary point in the tolerance intervals.)
The assertions follow by inverting the inequality $Q_\q\bigl(\R(X)\bigr)\ge 1-\d$, using the fact that
for a cumulative distribution function $Q$ and its quantile function:
$Q^{-1}(u)\le x$ if and only if $u\le Q(x)$, and $Q_+^{-1}(u)< x$ if and only if $u< Q(x-)$,
for $u\in [0,1]$ and $x\in\RR$.

A valid two-sided interval might be constructed as the intersection of two one-sided intervals, each at half of the 
error rate, but this will be conservative and lead to needlessly wide intervals. It makes
good sense to try and minimize the length of the interval. We consider this
in the next section for the case that the future observation $Z$ is univariate Gaussian.

\section{Normally distributed future observation}
\label{SectionUnivariateZ}
Consider the case that the future observation $Z$ is univariate Gaussian with mean
$\n$ and variance $\t^2$. Thus  the parameter is $\q=(\n,\t)$,  and $Z\given X,\n,\t\sim Q_\q=N(\n,\t^2)$.
The probability that the future observation is captured within a candidate tolerance interval $[L,U]$ is
$$Q_\q[L,U]=\Phi\Bigl(\frac{U-\n}\t\Bigr)-\Phi\Bigl(\frac{L-\n}\t\Bigr).$$
It is convenient to parametrize the interval $[L,U]$ by its midpoint $A=(L+U)/2$ and half length $B=(U-L)/2$. For
given $[L,U]$, or $\{A,B\}$, and $\d\in (0,1)$, define the set
$$G_{A,B,\d}=\bigl\{\q=(\n,\t): Q_\q[L,U]\ge 1-\d\bigr\}.$$
For given $[L,U]$ the set $G_{A,B,\d}$ is shaped as in Figure~\ref{FigureOne}.
It  is symmetric about the vertical line $\n=(L+U)/2$ and intersects the horizontal axis $\t=0$ in the interval $[L,U]$.
Changing $A$ moves the set $G_{A,B,\d}$ horizontally, 
while changing $B$ changes its shape, with bigger $B$ making the set both wider and taller. 
Although we use the normal distribution as our example, 
similarly shaped sets and conclusions would be obtainable for other unimodal symmetric distributions.

\begin{figure}
\centerline{\includegraphics[height=4cm]{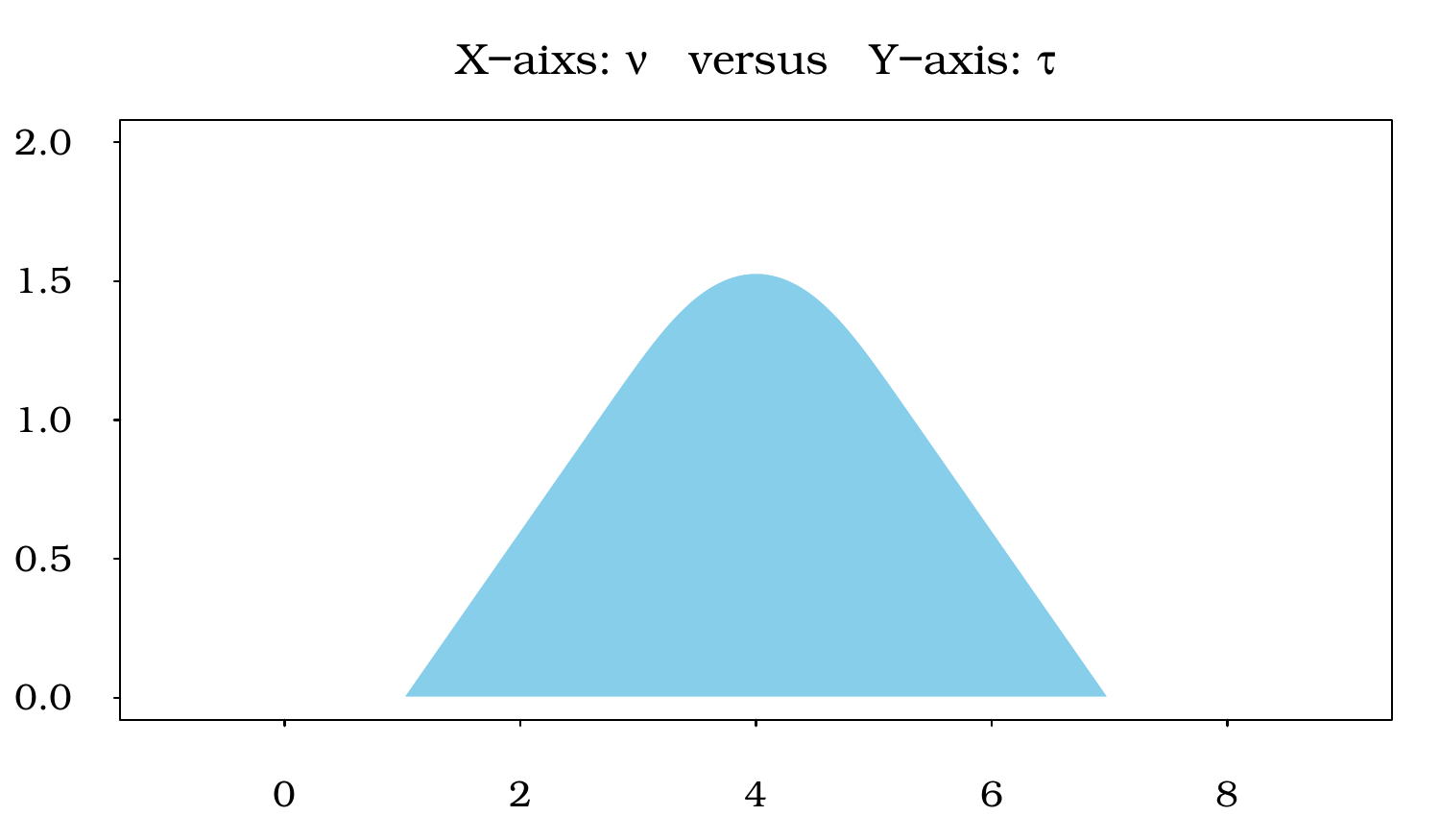}}
\caption{\footnotesize{The set $G_{A,B,\d}$ of pairs $(\n,\t)$ 
such that $\Phi\bigl((U-\n)/\t\bigr)-\Phi\bigl((L-\n)/\t\bigr)\ge 1-\d$, for $A=4$, $B=3$, $\d=0.1$. 
The number $A$ is its horizontal point of symmetry and $B$ is the half-length of its base.
The base of the set (the line segment at height $\t=0$) corresponds to the tolerance interval $[L,U]$.}}
\label{FigureOne}
\end{figure}

It follows that  $\R(X)=\bigl[L(X),U(X)\bigr]$ satisfies
inequality \eqref{EqFreqadTol}  and hence is a frequentist $(\d,\a)$-tolerance interval for $Z$ if and only if
$$P_\q\bigl(x: \q\in G_{A(x), B(x),\d}\bigr)\ge 1-\a,\qquad\forall \q.$$
In other words $[L(X),U(X)]$ is an $(\d,\a)$-tolerance interval for $Z$ if and only if
$G_{A(X),B(X),\d}$ is an $(1-\a)$-confidence region for $\q=(\n,\t)$.
Setting a joint confidence interval for location and dispersion is a familiar problem,
but here the shape is restrained to the form $G_{A,B,\d}$ and the focus will be
on minimizing $B=B(X)$ (in some average sense). Solutions will depend on the type of data $X$.
Standard solutions are available in closed form for the simplest models, and more generally
as approximations. 

Similarly $\R(X)=\bigl[L(X),U(X)\bigr]$ satisfies inequality \eqref{EqBayesianadTol} and hence is a $(\d,\a)$-Bayesian 
tolerance interval for $Z$ if 
\begin{equation}
\label{EqBayesianTolaranceZ}
\Pi\bigl(\q: \q\in G_{A(X),B(X),\d}\given X\bigr)\ge1-\a.
\end{equation}
It is natural to choose $A(X)$ and $B(X)$ to satisfy this inequality in such a way that
$B(X)$ is minimal. In the resulting optimization problem the posterior distribution  $\Pi(\q\in \cdot\given X)$ is a fixed
probability distribution on the upper half plane and optimization entails shifting and scaling the shape shown 
in Figure~\ref{FigureOne} in a position such that it captures posterior mass at least $1-\a$, 
meanwhile minimizing its width. In Section~\ref{SectionComputation} we show how to achieve this numerically 
given a large sample from the posterior distribution.  

The data $X$ determines the posterior distribution, but does not enter the optimization problem.
The parameter $\q$ may not be the full parameter characterising the distribution of $X$, but
our computational strategy will work as long as $\q$ is a function of this full parameter. 
For instance, if $X$ follows a linear regression model with predictor ``time'' 
and $Z$ is an observation at time 0, then $\nu$ will be a function of the regression intercept;
if $X$ follows a random-effects model, then typically $\n$ will depend on the fixed effects
and $\t^2$ will be a specific linear combination of the variance components, depending on practical interests.   

Frequentist methods typically choose $A(X)$ equal to a standard estimator of $\nu$. 
One might guess that the Bayesian solution will be to take
$A(X)$ equal to the posterior mean $\E(\nu\given X)$ of $\nu$.
This would be convenient as it would reduce the optimization of $(A,B)$ to  
the problem of only optimizing $B$.
However, the posterior mean does not necessarily give the minimal length interval.
The following lemma gives a sufficient condition.

\begin{lemma}
\label{LemmaToleranceIntervalAtPosteriorMean}
Suppose that the conditional distribution of $\n$ given $(X,\t)$ is unimodal and symmetric with decreasing density to the 
right of its mode and has mean $\E(\nu\given X,\t)$ that  is free of $\t$. Then
the shortest $(\d,\a)$-Bayesian tolerance interval $[L,U]$ for a future variable
$Z\given X,\n,\t\sim N(\n,\t^2)$ is centered at the posterior mean $\E (\n\given X)$.
\end{lemma}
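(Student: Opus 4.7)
The plan is to exploit the horizontal symmetry of $G_{A,B,\d}$ about the line $\n=A$ and reduce the joint optimization over $(A,B)$ to a one-parameter one, by arguing that for every fixed $B$ the posterior probability $\Pi(G_{A,B,\d}\given X)$ is maximized in $A$ at $A=\E(\n\given X)$. Once this is established, the smallest $B$ for which \eqref{EqBayesianTolaranceZ} can be met is the smallest $B$ for which it is met at $A=\E(\n\given X)$, so the shortest valid interval can be taken centered there.

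First I would observe that the capture probability
$\Phi\bigl((A+B-\n)/\t\bigr)-\Phi\bigl((A-B-\n)/\t\bigr)$
is symmetric in $\n$ about $A$ and unimodal in $\n$ at $A$ for every $\t>0$. Consequently each horizontal slice of $G_{A,B,\d}$ at height $\t$ is either empty or a closed interval of the form $[A-h(\t),A+h(\t)]$, with $h(\t)\ge 0$ depending only on $B$, $\d$ and $\t$. Translating $A$ therefore translates the entire set horizontally without deforming it.

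Next I would condition the posterior on $\t$:
\begin{equation*}
\Pi\bigl(G_{A,B,\d}\given X\bigr)=\int \Pi\bigl(A-h(\t)\le \n\le A+h(\t)\given X,\t\bigr)\,d\Pi(\t\given X).
\end{equation*}
The marginal posterior of $\t$ does not involve $A$, so it suffices to show that for every fixed $\t$ and $h>0$ the conditional probability $\Pi\bigl(|\n-A|\le h\given X,\t\bigr)$ is maximized in $A$ at the mode of the conditional density of $\n$ given $(X,\t)$. Under the unimodal-symmetric hypothesis this mode coincides with the conditional mean $\E(\n\given X,\t)$, which by assumption does not depend on $\t$ and hence equals $\E(\n\given X)$.

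The main technical step, and the place where both hypotheses on the conditional density are actually used, is the one-variable claim: if $f$ is a density symmetric about its mode $\m$ and non-increasing on $[\m,\infty)$, then $A\mapsto\int_{A-h}^{A+h}f(\n)\,d\n$ attains its maximum at $A=\m$. I would verify this by differentiating with respect to $A$ to obtain $f(A+h)-f(A-h)$ and then rewriting $f(A-h)=f(2\m-A+h)$ by symmetry; for $A\ge\m$ both arguments lie in $[\m,\infty)$ with $A+h\ge 2\m-A+h$, so monotonicity forces $f(A+h)\le f(A-h)$ and the derivative is non-positive. Symmetry of the problem in $A-\m$ handles $A\le\m$. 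Combining the pieces, for any $(A,B)$ satisfying \eqref{EqBayesianTolaranceZ} the pair $(\E(\n\given X),B)$ also satisfies it, and the shortest $(\d,\a)$-Bayesian tolerance interval may be taken centered at the posterior mean.
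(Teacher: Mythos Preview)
Your argument is essentially the paper's own: condition on $\t$, note that each section $(G_{A,B,\d})_\t$ is a symmetric interval about $A$, and use symmetric unimodality of $\n\given X,\t$ to see that the conditional probability of that section is maximized at $A=\E(\n\given X,\t)$, which is $\t$-free by hypothesis. The only slip is in your differentiation step: for $A\ge\m$ it is not true that both $A+h$ and $2\m-A+h$ lie in $[\m,\infty)$ (take $h<A-\m$); the clean fix is to observe that $|A+h-\m|\ge|A-h-\m|$ whenever $A\ge\m$, so the symmetric-unimodal density satisfies $f(A+h)\le f(A-h)$ directly.
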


\begin{proof}
We can decompose the probability on the left side of \eqref{EqBayesianTolaranceZ} as
$$\Pi\bigl(G_{A,B,\d}\given X\bigr)=\int\Pi\bigl(\n\in (G_{A,B,\d})_\t\given X,\t\bigr)\,d\Pi(\t\given X),$$
where $(G_{A,B,\d})_\t=\{\n: (\n,\t)\in G_{A,B,\d}\}$ is the section of $G_{A,B,\d}$ at height $\t$. By the unimodality and
monotonicity the integrand is maximized over $A$ for every $\t$ and a given $B$ by choosing $A=\E (\n\given X,\t)$. 
If this does not depend on $\t$, then this common maximizer $A$
will maximize the whole expression. Since we need to determine $A$ and $B$ so that the expression
is at least $1-\a$, maximizing it over $A$ will give the minimal $B$.
By assumption  $A=\E (\n\given X,\t)=\E (\n\given X)$.
\end{proof}

The condition of the preceding lemma is not unreasonable, but depends on the prior, as illustrated in the
following simple example. In Section~\ref{SectionLMM} we show that for a linear mixed model
the condition is approximately satisfied. Then choosing $A$ equal to the
posterior mean is a fast computational shortcut that may perform almost as well as the optimal solution.

\begin{example}
\label{ExampleNormaliid}
\rm
The simplest possible data model is to let  $X=(X_1,\ldots, X_n)$ be a random
sample from the $N(\n,\t^2)$-distribution. This example was already discussed
by Aitchison \cite{AitchisonBTR}. Here we highlight the implications of Lemma~\ref{LemmaToleranceIntervalAtPosteriorMean}.

For the standard priors $\t^2\sim IG(\a_0,\b_0)$ and $\n\given \t\sim N(a,\t^2/b)$, the conditional posterior distribution of $\n$ given $\t$
is normal with mean
$$\E\bigl(\n\given \t, X\bigr)=\frac{ba+n\bar X}{b+n}.$$ 
Since this is independent of $\t$.
the preceding discussion shows that the shortest $(\d,\a)$-tolerance interval is centred at the posterior mean of $\n$.

Choosing the prior variance $\var(\n\given \t)$ proportional to $\t^2$, which is customary, is crucial for this finding. 
For instance, if we set the prior $\n$ to be independent of $\t$, say $\nu\given \t\sim N(a,b^{-1})$, then
the conditional posterior mean changes to 
$$\E\bigl(\n\given \t, X\bigr)=\frac{ba\t^2+n\bar X}{b\t^2+n}.$$ 
This is $\bar X$ if $\t=0$ and shrinks to the prior mean $a$ as $\t\ra\infty$.
For illustration, let $a=0,\: b=0.1,\: \a_0=\b_0=0.01$.
Given data with  $n=3,\: \bar{x}=10,\: s^2=1$, we approximated 
the posterior distribution of $(\n,\t)$ given $X$ by a Gibbs sampler,
using the full conditional posteriors, where $s^2=\sum{(x_i-\bar{x})^2}/(n-1)$:
\begin{align*}
\n\given \t, X&\sim N\Big(\frac{ba\t^2+n\bar X}{b\t^2+n},\frac{\t^2}{b\t^2+n}\Bigr),\\
\t^2\given \n, X&\sim IG\Big(\a_0+\frac{n}{2},\b_0+\frac{(n-1)s^2+n(\bar X-\n)^2}{2}\Big).
\end{align*}
The contour plots of the posterior distribution in the left panel of Figure~\ref{Figure2} show 
dependence between $\n$ and $\t$ given $X$, and ensuing functional dependence of $\E(\n\given \t,X)$ on $\t$. 
Using Algorithm~\ref{AlgoProposed}, as explained in Section~\ref{SectionComputation}, 
we computed the shortest tolerance interval (i.e.\ the smallest $\hat{B}$) for every possible location $A$
of the interval, for $A$ in a neighbourhood of the posterior mean $\E(\n\given X)$. 
The right panel of Figure~\ref{Figure2} shows $\hat B$ as a function of $\hat A$. The minimum value
is \emph{not} taken at  the location of the posterior mean $\E(\nu\given X)$, which is
indicated by a dashed line. 

Admittedly the data in this example has been tweaked to illustrate the principle.
Inspection of the vertical scale for $\hat B$ shows that the global minimal length 
of a tolerance interval is only slightly smaller than the length of the minimal interval centred at the posterior mean of $\nu$.
In  Section~\ref{SectionLMM} we theoretically investigate a similar phenomenon for linear mixed models,
and in Section~\ref{SectionFrequentistJustification} we study this approximation in a large sample context.
\end{example}

\begin{figure}
\centerline{\includegraphics[height=4cm]{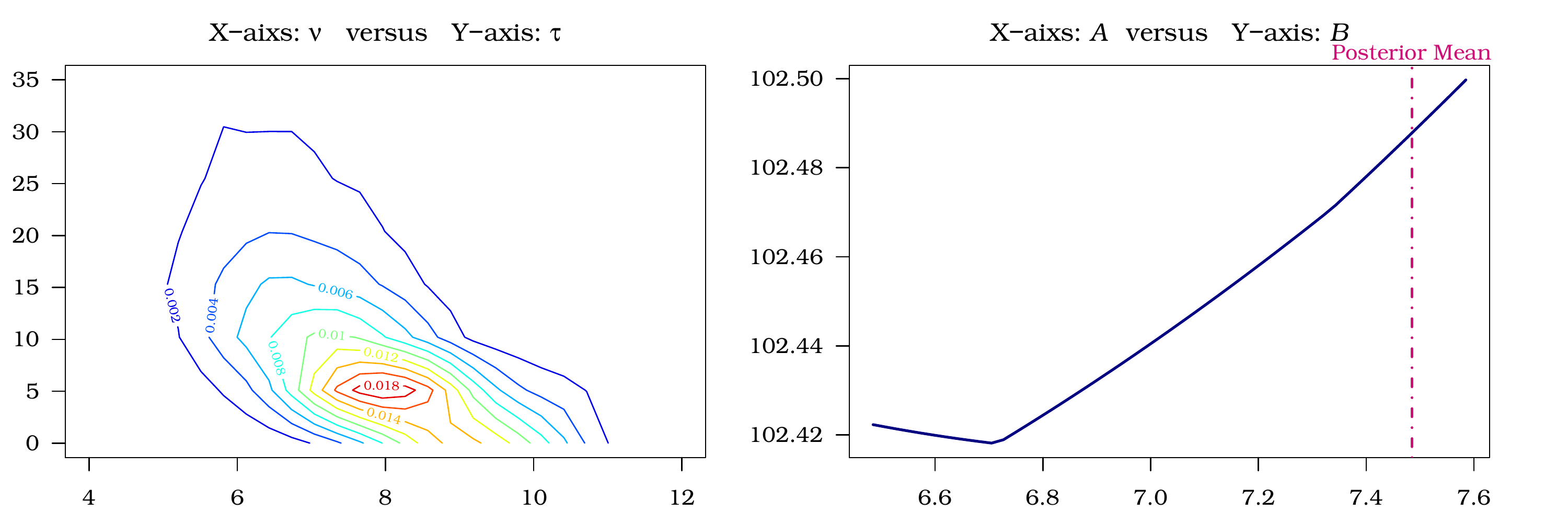}}
\caption{\footnotesize{
Left panel: density-level contour plots of MCMC approximation to realization of posterior distribution $\n,\t\given X$ in Example~\ref{ExampleNormaliid} 
(prior parameters: $a=0,\: b=0.1,\: \a_0=\b_0=0.01$; data $\bar{x}=10$, $s^2=1$, $n=3$).
Right panel: corresponding half-lengths $B$ (vertical axis) of the $(\d=0.05, \a=0.1)$-tolerance interval centered at $A$
(horizontal axis); the minimal length is not taken at the posterior mean  $\E(\n\given X)$, whose location is
indicated by the abscissa of the dotted line.}}
\label{Figure2}
\end{figure}

\subsection{Computational strategy}
\label{SectionComputation}
In this section we elaborate on the computation of the two-sided Bayesian $(\d,\a)$-tolerance interval 
for a normally distributed univariate future variable $Z$, as discussed in Section~\ref{SectionUnivariateZ}.
We also compare our approach to the ones taken in \cite{wolfinger} and \cite{krishnamoorthy2009}.
We assume given a large sample of values $\q_j=(\n_j,\t_j)$ from the posterior distribution of $\q=(\n,\t)$ given $X$.
This could be the result of an MCMC run of a sampler for the posterior distribution, or, depending on the data model,
of using an analytic formula for the posterior distribution. We shall use the sample values $(\n_j,\t_j)$ to approximate expectations
under the posterior distribution, whence they need not be independent, and values from a (burnt-in) MCMC run will qualify.
Possible dependence together with sample size will determine the error due to simulation.

The idea is to replace the posterior distribution in \eqref{EqBayesianadTol} or \eqref{EqBayesianTolaranceZ}
by the empirical distribution of the values $\{\q_j\}_{j=1}^J$. For
a given interval $\R(X)=[L,U]$ we can (in theory) compute the values $Q_{\q_j}[L,U]$ and next search for the
interval $[L,U]$ of minimal length $U-L$ such that 
$$\frac 1J\# \bigl\{1\le j\le J:  Q_{\q_j}[L,U]\ge 1-\d\bigr\}\doteq 1-\a,$$
where $\doteq$ means approximately equal, yielding a slightly conservative or anti-conservative
solution in case exact equality is not attainable due to discretization. 

Wolfinger \cite{wolfinger} proposed the algorithm, summarized as Algorithm~\ref{AlgoWKM} in the display, and this
was refined (or corrected) by Krishnamoorthy and Mathew \cite{krishnamoorthy2009}, Chapter~11. 
This algorithm has a convenient graphical representation and 
has been widely adopted in practice.  The idea is to compute for every $\q_j$
the quantiles $L_j=Q_{\q_j}^{-1}(\d/2)$ and $U_j=Q_{\q_j}^{-1}(1-\d/2)$, yielding intervals $[L_j,U_j]$
with $Q_{\q_j}[L_j,U_j]\ge 1-\d$, and next setting the tolerance interval $[L,U]$ equal to 
an interval that is symmetric about the posterior mean and contains a fraction $1-\a$ 
of the intervals $[L_j,U_j]$ (Krishnamoorthy and Mathew), or is contained in a fraction $\a$ of these intervals 
(Wolfinger).  
The graphical interpretation is to plot the points $(U_j,L_j)$ in the $x$-$y$-plane and search for a point
$(U,L)$ on the line $y+x=2\hat\n$, for $\hat\n$ the posterior mean or some other useful estimator,
such that a fraction $1-\a$ of the points are in the left-upper quadrant relative to the point $[L,U]$
(see Figure~11.1 in \cite{krishnamoorthy2009} for an example).
This method results in an interval that is more confident than the prescribed level $1-\a$,
and appears not to optimize the length of the interval.

\begin{figure}
\resizebox{.9\linewidth}{!}{
\begin{minipage}{\linewidth}
\begin{algorithm}[H]
\caption{WKM solution for two-sided tolerance interval}\label{AlgoWKM}
\KwData{Given $\a,\d,\{(\n_j,\t_j)\}_{j=1}^J$}
Let $\hat{A}=\sum_j \n_j/J$ \;
Calculate two quantiles sequences: $\{L_j\equiv Q^{-1}_{\n_j,\t_j}(\frac{\d}{2})\}^J_{j=1}$ 
and $\{U_j\equiv Q^{-1}_{\n_j,\t_j}(1-\frac{\d}{2})\}^J_{j=1}$  \;  
Find a point $(\hat L,\hat U)$  such that $\hat L+\hat U=2\hat A$\ satisfying one of the following \;
   $\quad$(W) $\arg\min_{\hat L,\hat U} \left|\frac{\#S}{J}-\a \right|$,
                       where $S=\{(L_j,U_j):L_j\leq\hat L,  U_j\geq\hat U\}$\;
   $\quad$(KM) $\arg\min_{\hat L,\hat U} \left|\frac{\#S}{J}-1+\a \right|$,
                       where $S=\{(L_j,U_j):L_j\geq\hat L, U_j\leq\hat U\}$\;
\KwResult{two-sided tolerance interval $[\hat L,\hat U]$}  
\end{algorithm}
\end{minipage}
}
\end{figure}

Here we propose another algorithm that directly utilizes \eqref{EqBayesianadTol}.
We seek to minimize $B$ under the constraint that the interval
$[L,U]=[A-B,A+B]$ satisfies \eqref{EqBayesianadTol}. This takes two steps: 
for fixed $A$ we  optimize over $B$; next we perform a grid search over $A$.
Because given $A$, the optimizer over $B$ will yield equality in \eqref{EqBayesianadTol},
$\hat B$ will be the solution to 
\begin{equation}
\Pi\biggl[    \Phi\Bigl(\frac{{A}+B-\n}{\t}\Bigr)
              -\Phi\Bigl(\frac{{A}-B-\n}{\t}\Bigr)\geq 1-\d  \given X \biggr]=1-\a.
\label{EqTarget}
\end{equation}
The posterior mean $\E(\n\given X)$ will typically be close to the optimal solution for $A$,
and is a good starting point for this parameter. As a fast approximation we may also set $A$
equal to this value and omit the grid search.

In practice, we replace the posterior distribution in equation \eqref{EqTarget} by an average over the sample
values $(\n_j,\t_j)$. The posterior mean of $\n$ can be approximated by the average of the sample values $\n_j$.
Given $\hat A$ we approximate $\hat B$ by the 
$(1-\a)$-quantile of the points $g_j$ computed as the solutions to
\begin{equation}
\label{FunctionG}
Q_{\n_j,\t_j}[\hat A-g_j,\hat A+g_j]
\equiv\Phi\Bigl(\frac{\hat A+g_j-\n_j}{\t_j}\Bigr)-\Phi\Bigl(\frac{\hat A-g_j-\n_j}{\t_j}\Bigr)=1-\d.
\end{equation}
The motivation for this procedure is that $(\n_j,\t_j)\in G_{\hat A,\hat B,\d}$ if and only if $Q_{\n_j,\t_j}[\hat A-\hat B, \hat A+\hat B]\ge 1-\d$,
whence precisely the points $(\n_j,\t_j)$ with $g_j\le \hat B$ satisfy 
$Q_{\n_j,\t_j}[\hat A-\hat B,\hat A+\hat B]\ge 1-\d$ and hence are inside the set 
$G_{\hat A,\hat B,\d}$, whereas the other points are outside this set. 
This makes the posterior mass of the set equal to $1-\a$ up to simulation error.

\begin{figure}
\centerline{\includegraphics[height=4cm]{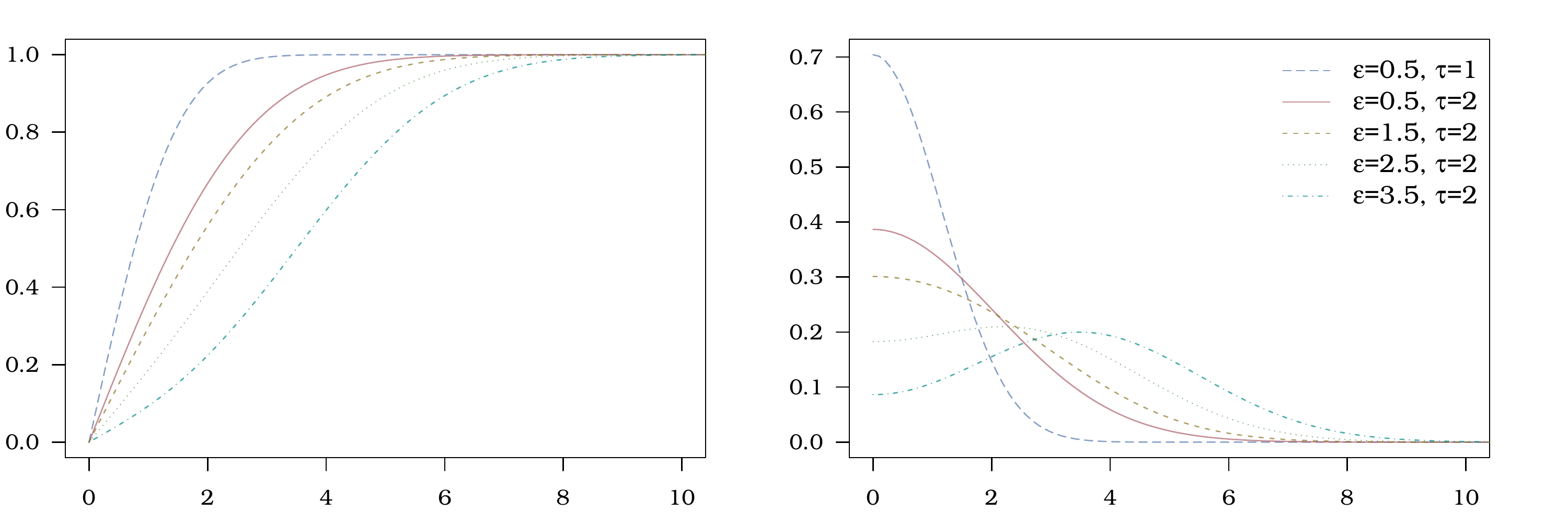}}
\caption{\footnotesize{
Left panel: plot of the curves
$g\mapsto \Phi(\frac{\e+g}{\t})-\Phi(\frac{\e-g}{\t})$ for various settings of $\e$ and $\t$.
Right panel: derivatives of these curves.}}
\label{Figure3}
\end{figure}

Given $\hat A$ and $(\n_j,\t_j)$, the function $g\mapsto Q_{\n_j,\t_j}[\hat A-g,\hat A+g]$ in \eqref{FunctionG} 
is  increasing, from the value 0 when $g=0$ to 1 as $g\rightarrow\infty$ (see Figure~\ref{Figure3}). 
The solutions $g_j$ to each equation \eqref{FunctionG} can be found fast by a Newton-Raphson algorithm, 
with some caution on choosing the initial value for $g_j$ (the algorithm 
will diverge if the initial value is chosen in the domain where $Q_{\n_j,\t_j}[\hat A-g_j,\hat A+g_j]$ is very close to 1).
An appropriate algorithm is listed in Algorithm~\ref{AlgoProposed}. Note that
the (middle) expression in \eqref{FunctionG} does not change if $\e:=\hat A-\n_j$ is replaced by $-\e$.

\begin{figure}
\resizebox{.9\linewidth}{!}{
\begin{minipage}{\linewidth}
\begin{algorithm}[H]
\caption{Proposed solution for two-sided tolerance interval}\label{AlgoProposed}
\KwData{Given $\a,\d,\{(\n_j,\t_j)\}_{j=1}^J$}
  
  Let $\hat{A}=\sum_j \n_j/J$ \;
  \For{$j=1,2,...,J$}{
	Solve equation \eqref{FunctionG} by a Newton-Raphson algorithm as follows\;  
	\eIf{$|\hat A-\n_j|<\t_j$}{$g_0=|\hat A-\n_j|+\t_j$ 
	}{$g_0=|\hat A-\n_j|$}
	set initial value for $g_j$ at $g_0$\;
	\While{$\o>0.0001$}{
           let $q_{\n_j,\t_j}[\hat A-g_j,\hat A+g_j]$ be the first-order derivative 
           of $Q_{\n_j,\t_j}[\hat A-g_j,\hat A+g_j]$\;
	   $g_j=g_j-\frac{Q_{\n_j,\t_j}[\hat A-g_j,\hat A+g_j]-1+\d}{q_{\n_j,\t_j}
                 [\hat A-g_j,\hat A+g_j]}$\;
	   $\o=Q_{\n_j,\t_j}[\hat A-g_j,\hat A+g_j]-1+\d$;
        }
   }
The above loop results in $\{g_j\}_{j=1}^J$, 
and let $\hat B$ be its $(1-\a)\emph{th}$ sample quantile\;
\KwResult{two-sided tolerance interval $[\hat A-\hat B,\hat A+\hat B]$}  
\end{algorithm}
\end{minipage}
}
\end{figure}

\section{Data from a linear mixed model}
\label{SectionLMM}
In this section we apply the preceding to a model that is representative for  practice in pharmaceutical quality control:
the \emph{linear mixed model} (LMM). We assume that the data $X$ are acquired in an LMM design,
and that the future variable $Z$ is defined in terms of the same LMM.
We concentrate attention to the two-sided $(\d,\a)$-tolerance interval.

In the LMM we observe a vector $X=U\b+V\g+e$, for known (deterministic) matrices
$U$ and $V$ of covariates, a vector of fixed effects parameters $\b$, an unobserved random effect vector $\g$,
and an error vector $e$. Assume that $\g$ and $e$ are independent, with
\begin{equation}
\label{EqLMM}
\g\sim N(0,D),\qquad e\sim N(0,\s^2 I).
\end{equation}
Then the data  $X$ follows a $N(U\b, C)$-distribution, for $C=VDV^T+\s^2I$,
and the full parameter is $(\b,D,\s^2)$.

Consider predicting a new observation $Z=u^T\b+v^T\g'+e'$ with given fixed and random effects coefficients $u$ and 
$v$ and \emph{newly generated} random effect vector $\g'$ and error $e'$, with $\g'\sim\g$ and $e'\sim N(0,\s^2)$. 
Thus $\g'$ is assumed equal in distribution to, but independent of $\g$, and
similarly for $e'$. This target for prediction is reasonable in many contexts, but sometimes another choice, in particular for $\g'$,
may be more relevant. Typically $\g$ will carry a group structure matched by a block structure in $V$. The vector $v$ will
then have nonzero coordinates corresponding to a single group.
The target corresponds to setting the distribution $Q_\q$ of $Z$ equal to $N(\n,\t^2)$, with
\begin{equation}
\label{EqNuTau}
\n=u^T\b, \qquad \t^2=v^TDv+\s^2.
\end{equation}
The ``prediction'' parameter $\q=(\n,\t)$ is of smaller dimension than
the full parameter'' $(\b,D,\s^2)$ governing the distribution of the data $X$, 
whence part of the latter full parameter can be  considered a nuisance parameter.
To set a Bayesian tolerance interval we need a posterior distribution of $\q$ given the data $X$. This will typically
be inferred from a posterior distribution of the full parameter, resulting from a prior distribution on $(\b,D,\s^2)$.

The (conditional) posterior distribution for a conditional prior $\b\given D,\s^2\sim N(0,\Lambda)$, 
where $\Lambda$ may depend on $(D,\s^2)$, satisfies
$$\b\given D,\s^2,X\sim
N\Bigl((U^TC^{-1}U+\Lambda^{-1})^{-1}U^TC^{-1}X, (U^TC^{-1}U+\Lambda^{-1})^{-1}\Bigr).$$
(An alternative expression for the posterior mean is $\Lambda U^T(U\Lambda U^T+C)^{-1]}X$.)
In general $\E(\nu\given D,\s^2,X)=u^T\E(\b\given D,\s^2, X)$  will depend on $D$ and $\s^2$ (hidden in $C$) and hence
typically also on $\t^2$, in view of \eqref{EqNuTau}. 
Therefore Lemma~\ref{LemmaToleranceIntervalAtPosteriorMean} does not apply, 
and there appears to be no reason that a shortest tolerance interval 
would be centered at the posterior mean of $\nu$. To obtain the shortest interval,
Algorithm~\ref{AlgoProposed} should be augmented with a search on possible centerings $A$.

As in the standard \emph{i.i.d.} model in Example~\ref{ExampleNormaliid},
the dependence on $\s^2$ can be removed by choosing the variances $\Lambda$ and $D$ proportional to
$\s^2$. If $D=\s^2D_0$ and $\Lambda=\s^2\Lambda_0$, 
then $C$ will be $\s^2(VD_0V^T+I)=:\s^2 C_0$ and the conditional posterior mean of $\b$
will be $(U^TC_0^{-1}U+\Lambda_0^{-1})^{-1}U^TC_0^{-1}X$.
However, the dependence on $D_0$ (through $C_0$) remains, in general.

Letting the prior covariance matrix $\Lambda$ tend to infinity corresponds to a non-informative prior on $\b$.
If all other quantities are fixed and $\Lambda\ra\infty$, then 
$$\E(\b\given D,\s^2,X)\ra (U^TC^{-1}U)^{-1}U^TC^{-1}X.$$
The limit is the maximum likelihood estimator of $\b$ in the model where $C$ is known.
Since this is still dependent on $C$ (and hence $D$ and $\s^2$), it seems 
that for both the Bayesian and frequentist tolerance intervals the two parameters $\n$ and $\t$ 
cannot be separated in general. The choice $\Lambda=\l (U^TC^{-1} U)^{-1}$ leads to $\l/(1+\l)$
times the maximum likelihood estimator, and hence also still depends on $C$.

\subsection{Approximations to the conditional posterior mean}
For special designs the dependence of $\E(\b\given D,\s^2,X)$ on $(D,\s^2)$ is only mild and can be quantified. 
We discuss three examples of linear mixed models. For clarity we restrict to balanced designs, although
its flexibility makes the Bayesian approach particularly valuable for unbalanced cases, as is illustrated by
the  numerical examples in Section~\ref{SectionNumericalExamples}.

\begin{example}
[One-way random effects]
\rm
Suppose $X$ is a vector with coordinates $X_{ik}=\b+\g_i+e_{ik}$, for $i=1,\ldots,m$ and $k=1,\ldots,n$,
ordered as $(X_{11},\ldots, X_{1n}, X_{21},\ldots, X_{mn})$, where $\b\in \RR$ and 
$\g=(\g_1,\ldots, \g_m)^T$ with i.i.d.\ $\g_i\sim N(0,d^2)$, so that $D=d^2 I_m$,
for $I_m$ the $(m\times m)$-identity matrix.  As prior on $\b$ we choose a one-dimensional normal distribution $N(0,\l^2)$.

The matrix $U$ is the $mn$-vector $1_{mn}$ with
all coordinates equal to 1, while $V$ is the $(mn\times m)$-matrix with $i^{th}$  column having 1s in rows
$(i-1)n+1$ to $in$ and 0s in the other rows. Then $V^TV=n I_m$,
and $U^TV=n1_m$, and it can be verified that $C1_{mn}=(nd^2+\s^2)1_{mn}$ and hence $C^{-1}U=(nd^2+\s^2)^{-1}1_{mn}$.
The coefficient vector of $\E(\b\given D,\s^2,X)$ is
$$(U^TC^{-1}U+\l^{-2})^{-1}U^TC=\Bigl(mn+\frac{nd^2+\s^2}{\l^2}\Bigr)^{-1}1_{mn}^T.$$
For $\l=\infty$, this is free of $d^2$ and $\s^2$, while for finite, fixed  $\l$ and $m,n\ra\infty$, the
coefficient vector is $(mn)^{-1}\bigl(1+O(d^2/(m\l^2))+O(\s^2/(mn\l^2))\bigr)$. 

The dependence on $d$ and $\s$ can be removed by choosing $d=d_0\s$ and $\l=\l_0\sqrt{nd_0^2+1}\, \s$.
\end{example}

\begin{example}
[Full random effects]
\rm
Suppose $X_{ik}=u_{ik}^T\b+v_{ik}^T\g_i+e_{ik}$, ordered as in the preceding example, but
now with observed covariates $u_{ik}\in\RR^p$ and $v_{ik}\in\RR^q$, fixed effects parameter
$\b\in \RR^p$ and i.i.d.\ random effects $\g_i\sim N_q(0,D_q)$, for $i=1,\ldots, m$. The corresponding matrices
$U$ and $V$ are
$$U=\left(\begin{matrix} U_1\\\vdots\\ U_m\\\end{matrix}\right),\quad
V=\left(\begin{matrix} V_1&\cdots&0\\ \vdots&\ddots&\vdots\\ 0&\cdots&V_m\end{matrix}\right),\quad
U_i=\left(\begin{matrix} u_{i1}^T\\ \vdots\\ u_{in}^T\end{matrix}\right),\quad
V_i=\left(\begin{matrix} v_{i1}^T\\ \vdots\\ v_{in}^T\end{matrix}\right).$$
Then $C=VDV^T+\s^2I$ is an $(mn\times mn)$-block-diagonal matrix with $m$ blocks
$V_iD_qV_i^T+\s^2I_n$, and
\begin{align*}
U^TC^{-1} U&=\sum_{i=1}^m U_i^T(V_iD_qV_i^T+\s^2I_n)^{-1}U_i,\\
U^T C^{-1}&=\Bigl(U_1^T(V_1D_qV_1^T+\s^2I_n)^{-1},\ldots,U_m^T(V_mD_qV_m^T+\s^2I_n)^{-1}\Bigr).
\end{align*}
The matrices $U_i^TV_i$ and $V_i^TV_i$ are of dimensions $p\times q$ and $q\times q$, and are
sums over the $n$ observations (for $k=1,\ldots,n$)  per group $i$, as defined by the random effect $\g_i$. For
conventional asymptotics we could view them as $n$ times a matrix of fixed order. Then, for $D_0=\s^{-2}D_q$,
\begin{align}
\label{EqInverseSeries}
&(V_iD_0V_i^T+I)^{-1}=I-V_i(D_0^{-1}+V_i^TV_i)^{-1}V_i^T\nonumber\\
&\qquad=I-V_i(V_i^TV_i)^{-1}\bigl[D_0^{-1}(V_i^TV_i)^{-1}+I\big]^{-1}V_i^T\nonumber\\
&\qquad=I-V_i(V_i^TV_i)^{-1}\bigl[I-D_0^{-1}(V_i^TV_i)^{-1}+(D_0^{-1}(V_i^TV_i)^{-1})^2+\cdots\bigr]V_i^T\nonumber\\
&\qquad=P_{V_i^\perp}+ V_i(V_i^TV_i)^{-1}D_0^{-1}(V_i^TV_i)^{-1}V_i^T\nonumber\\
&\qquad\qquad\qquad\qquad\quad-\sum_{k=2}^\infty(-1)^kV_i(V_i^TV_i)^{-1}(D_0^{-1}(V_i^TV_i)^{-1})^kV_i^T,
\end{align}
where $P_{V^\perp}$ is the projection on the orthocomplement of the linear span of the columns of $V$.
If $V_i^TV_i$ is large, then the series on the right can be neglected, as its terms contain multiple terms $(V_i^TV_i)^{-1}$.
There is then still dependence on $D$ and $\s^2$ in the second term, which may dominate the first term.

In a full random effects model,
every random effect is matched by a fixed effect with the same covariate vector (supplying a common mean value to
the random effects), and hence $u_{ik}=v_{ik}$, for every $(i,k)$, and consequently $U_i=V_i$. Then $U_i^TP_{V_i^\perp}=0$ 
and $U_i^TV_i(V_i^TV_i)^{-1}=I_p$. If $n^{-1}V_i^TV_i$ stabilizes as $n\ra\infty$,
\begin{align*}
\s^2U^TC^{-1} U&=\sum_{i=1}^m\sum_{k=1}^\infty(-1)^{k-1}D_0^{-1}((V_i^TV_i)^{-1}D_0^{-1})^{k-1}=m\Bigl(D_0^{-1}+O\bigl(\frac 1n\bigr)\Bigr),\\
\s^2U^TC^{-1}&=\sum_{k=1}^\infty(-1)^{k-1} \Bigl(D_0^{-1}(V_i^TV_i)^{-1})^{k}V_i^T\Bigr)_{i=1}^m\\
&=\Bigl(D_0^{-1}(V_i^TV_i)^{-1}\Bigl(I+O\bigl(\frac 1n\bigr)\Bigr)V_i^T\Bigr)_{i=1}^m.
\end{align*}
Thus we find that the coefficient vector of the posterior mean of $\b$ satisfies
\begin{align*}&(U^TC^{-1}U+\Lambda^{-1})^{-1}U^TC^{-1}\\
&\qquad=\biggl(\frac1m\Bigl(I+O\Bigl(\frac{D_0}n\Bigr)+\frac{D_0\Lambda^{-1}}m\Bigr)^{-1}\Bigl((V_i^TV_i)^{-1}\Bigl(I+O(\frac1n\bigr)\Bigr)V_i^T\Bigr)\biggr)_{i=1}^m.
\end{align*}
To first order this is free of $D$ and $\s^2$ with relative remainders of the order $D_0/n$ and $D_0\Lambda^{-1}/m$.
\end{example}

\begin{example}
[Random effects with additional fixed effects]
\rm
In the preceding example every random effect may be matched by a fixed effect with the same covariate vector 
(supplying a common mean value to the random effects), but there more fixed than random effects.
This corresponds to setting $u_{ik}^T=(v_{ik}^T, \bar u_{ik}^T)$, 
which implies $U_i=(V_i,\bar U_i)$, for an $(n\times (p-q))$-matrix $\bar U_i$. 
The formulas in the preceding example must then be adapted to, where the approximations
refer to ignoring the series in \eqref{EqInverseSeries},
for $W_i=V_i(V_i^TV_i)^{-1}$, 
\begin{align*}
\s^2U^TC^{-1} U&\doteq\sum_{i=1}^m \biggl[\left(\begin{matrix} 0&0\\ 0&\bar U_i^TP_{V_i^\perp}\bar U_i\end{matrix}\right)
+ \left(\begin{matrix} D_0^{-1}& D_0^{-1}W_i^T\bar U_i\\
\bar U_i^TW_iD_0^{-1}&\bar U_i^TW_iD_0^{-1}W_i^T\bar U_i\end{matrix}\right)\biggr],\\
\s^2U^TC^{-1}&\doteq\biggl(\left(\begin{matrix} 0\\\bar U_i^TP_{V_i^\perp}\end{matrix}\right)
+\left(\begin{matrix} D_0^{-1}W_i^T\\ \bar U_i^TW_iD_0^{-1}W_i^T\end{matrix}\right)\biggr) _{i=1}^m.
\end{align*}
It is reasonable to expect that the  matrices $\bar U_i^TW_i=\bar U_i^TV_i (V_i^TV_i)^{-1}$ will settle down as, as will
the matrices $n^{-1}\bar U_i^TP_{V_i^\perp}\bar U_i$.  Then up to lower order terms
$$(U^TC^{-1}U)^{-1}U^TC^{-1}\doteq
\sum_{i=1}^m \left(\begin{matrix} D_0^{-1}& D_0^{-1}W_i^T\bar U_i\\
\bar U_i^TW_iD_0^{-1}&\bar U_i^TP_{V_i^\perp}\bar U_i\end{matrix}\right)^{-1}
\biggl(\left(\begin{matrix} D_0^{-1}W_i^T\\ U_i^TP_{V_i^\perp}\end{matrix}\right)\biggr) _{i=1}^m.$$
Here the three appearances of $D_0^{-1}$ in the top row cancel each other (as follows by factorizing out the
block diagonal matrix with blocks $D_0^{-1}$ and $I_{p-q}$ from  the inverse matrix), while the factor
$\bar U_i^TW_iD_0^{-1}$ in the bottom row is a factor $1/n$ smaller in order than the matrix 
$\bar U_i^TP_{V_i^\perp}\bar U_i$ and hence can be set to 0 up to order $1/n$. It follows that again the matrix
is free of $D$ and $\s^2$ up to order $1/n$.
\end{example}

\subsection{Numerical examples}
\label{SectionNumericalExamples}
We evaluate the small sample performance of our proposed algorithm via a simulation study,
with data generated from a one-way random effects model. We observe $(X_{ik}: i=1,2,\ldots, m; k=1,2,\ldots,n_i)$,
where $X_{ik}$ is the $k^{th}$ value of the $i^{th}$ group and satisfies $X_{ik}=\n+\g_i+e_{ik}$
for i.i.d.\ $\g_i\sim N(0,d^2)$ independent of the i.i.d.\ error $e_{ik}\sim N(0,\s^2)$.
We are interested in the two-sided $(\d,\a)$-tolerance interval related to a future observation 
$Z=\n+\g+e\sim Q_\q=N(\n,\t^2)$, where $\g$ and $e$ are independent copies of the $\g_i$ and $e_{ik}$.
The parameter is $\q=(\n,\t^2)$, for $\t^2=d^2+\s^2$.

We used $6$ different parameter settings. In every setting the overall mean was set equal to  $\n=0$, and the group variance  to $d^2=1$.
The intra-correlation $\s^2/(d^2+\s^2)$ was chosen equal to the numbers $0.1,0.3,0.5,0.7,0.9$.
In every setting the number of groups was $m=6$ and the group sizes were $(n_1,\ldots, n_6)=(2,3,4,2,3,4)$.
This simulation setup is the same as in \cite{Gaurav2012}, and facilitates a cross-comparison
against the performance of a standard frequentist solution. 

We computed the $(\d=0.1,\a=0.05)$-tolerance interval $[L,U]$ by Algorithm~\ref{AlgoProposed}, 
for every of $K=1000$ replicates of the data in each parameter setting, 
and computed the true coverage $Q_\q[L,U]$ of these intervals using the true parameter $\q$ of the simulation.
We consider an interval with true coverage no less than $1-\d$ as ``qualified'', and compared
the empirical fraction of qualified intervals out of the $K$ replicates to the nominal value $1-\a$. 
The procedure is considered to perform well in the frequentist sense if this empirical fraction is close
to this nominal value. Here we must allow for the simulation error, which has a standard error of 
$\sqrt{p(1-p)/K}$, for $p$ the true coverage, which is unknown, but hopefully close to $1-\a$.  

For each simulated dataset the posterior distribution of $(\n,d,\s^2)$ was approximated by a standard Gibbs sampler
(with the vector of random effects $\g_i$ added in as a fourth parameter), before utilizing Algorithm~\ref{AlgoProposed}.  
Two setups of priors were deployed, both with independent priors on the three parameters $\n,d,\s$.
The first is the vanilla setup with vague marginal prior distributions 
$\n\sim N(0,1000)$, $d^2\sim IG(0.001,0.001)$ and ${\color{red} \s^2}\sim IG(0.001,0.001)$.
The second uses the same prior on $\s^2$, but uses a $t$-distribution for $\n$ given by the hierarchy
$\n\given \s_0\sim N(0,\s^2_0)$ and $\s^2_0\sim IG(0.001,0.001)$, and the prior on $d$ given by
the structural equation $d=|\xi|\o$ for independent $\xi\sim N(0,1)$ and $\o^2\sim IG(0.001,0.001)$.
The latter specification can also be understood as over-parameterizing the 
distribution of the random effect using two parameters instead of one, as $\g\given \xi,\o\sim N(0,\xi^2\o^2)$.
This ``parameter expansion'' is meant to 
enhance the mixing rate of the Gibbs sampler, in particular when the number of groups $m$ is small.
See \cite{browne2006} for a  comparison of methods (including nonBayesian methods) to fit the LMM.

In all simulation settings the tolerance intervals were constructed both by fixing the center point $A$ 
at the posterior mean $E(\n\given X)$, and by seeking an optimal value of $A$ to minimize the half length $B$ of the interval.
Thus four tolerance intervals were calculated based on each simulated dataset.
To save on computation time the optimization over $A$ was carried out only approximately.
Still for 85\% of the simulation cases a shorter interval was obtained than the interval at the posterior mean,
in a few cases as much as 20\% shorter, but in 75\% of the cases no more than a few percentage points.
Table~\ref{TabBOpt} reports the quotients of the lengths.

The proportions of intervals that attain true coverage at least $1-\d=0.9$, are listed in Table~\ref{TabRes}.
They are reasonably close to the nominal value $1-\a=0.95$, with deviations in both directions up to
several percentage points. The performance seems to depend on the true intra-correlation. This dependence 
follows a similar pattern as for the high-order asymptotic solution in \cite{Gaurav2012} without correction for imbalance, 
and is close to their solution that includes correction when the intra-correlation is small or very big.
The tolerance intervals centered at the (approximately) optimal value have shorter length and
attain lower confidence, but their performance seems to surpass slightly the intervals centered 
at the posterior mean $\E(\n\given X)$. 

The difference in  performance of Algorithm~\ref{AlgoProposed} between the two prior
setups is within the order of the simulation  error.

\begin{table}[ht]
\caption{\footnotesize{Interval length at optimal value of $A$ relative to fixing $A$ at $\E(\n\given X)$}}
\centering
\raff{1.3}
\resizebox{\columnwidth}{!}{%
\begin{tabular}{cccccccccc} \toprule
  & \multicolumn{4}{c}{\emph{under Vanilla setup}} & \phantom{abc} 
  & \multicolumn{4}{c}{\emph{under Parameter Expansion setup}} \\
  \cmidrule{2-5} \cmidrule{7-10}
  & $Min$ & $0.25^{th} qu.$ & $Median$ & $0.75^{th} qu.$ &
  & $Min$ & $0.25^{th} qu.$ & $Median$ & $0.75^{th} qu.$ \\
  \midrule
  $intra-correlation$ \\
  $0.1$  & 0.8420 & 0.9958 & 0.9984 & 0.9993 && 0.8625 & 0.9969 & 0.9991 & 0.9997 \\ 
  $0.3$  & 0.8244 & 0.9943 & 0.9984 & 0.9994 && 0.8569 & 0.9969 & 0.9989 & 0.9997 \\ 
  $0.5$  & 0.8192 & 0.9905 & 0.9981 & 0.9992 && 0.8374 & 0.9954 & 0.9985 & 0.9995 \\ 
  $0.7$  & 0.8193 & 0.9897 & 0.9979 & 0.9992 && 0.8020 & 0.9947 & 0.9984 & 0.9994 \\ 
  $0.9$  & 0.8102 & 0.9889 & 0.9980 & 0.9993 && 0.8058 & 0.9925 & 0.9980 & 0.9993 \\ 
   \bottomrule
\end{tabular}
}
\label{TabBOpt}
\end{table}

\begin{table}[ht]
\caption{\footnotesize{Approximated Confidence for ($\d=0.1,\a=0.05$)-Bayesian tolerance interval}}
\centering
\raff{1.3}
\resizebox{\columnwidth}{!}{%
\begin{tabular}{cccccc} \toprule
  & \multicolumn{2}{c}{\emph{under Vanilla setup}} & \phantom{abc} 
  & \multicolumn{2}{c}{\emph{under Parameter Expansion setup}} \\
  \cmidrule{2-3} \cmidrule{5-6}
  & $A=\E(\n\given X)$ & $A=Optimal$ &
  & $A=\E(\n\given X)$ & $A=Optimal$ \\
  \midrule
  $intra-correlation$ \\
  $0.1$ & 0.972 & 0.969 && 0.968 & 0.963 \\ 
  $0.3$ & 0.964 & 0.955 && 0.955 & 0.949 \\ 
  $0.5$ & 0.936 & 0.921 && 0.925 & 0.917 \\ 
  $0.7$ & 0.925 & 0.907 && 0.915 & 0.911 \\ 
  $0.9$ & 0.952 & 0.941 && 0.940 & 0.935 \\ 
   \bottomrule
\end{tabular}
}
\label{TabRes}
\end{table}

\section{Frequentist justification of the Bayesian procedure}
\label{SectionFrequentistJustification}
In this section we show that Bayesian tolerance regions are often also approximate frequentist tolerance regions, of corresponding
levels. We consider an asymptotic setup, with data $X=X_n$ indexed by a parameter $n\ra\infty$,
in which the \emph{Bernstein-von Mises theorem} holds. The latter theorem (see e.g.\ \cite{vanderVaart(1998)}, Chapter~10) entails that
the posterior distribution $\Pi_n(\cdot\given X_n)$ of $\q$ can be approximated by
a normal distribution with deterministic covariance matrix, centered at an estimator $\hat\q_n=\hat\q_n(X_n)$, 
\begin{equation}
\label{EqBvM}
\Pi_n(\cdot\given X_n)-N\Bigl(\hat\q_n,\frac1n \Sigma_{\q}\Bigr)\prob 0,
\end{equation}
(in total variation norm), where the estimators $\hat\q_n=\hat\q_n(X_n)$ satisfy
\begin{equation}
\label{EqANMLE}
\sqrt n(\hat\q_n-\q)\generalweak\q N(0,\Sigma_\q).
\end{equation}
Under regularity conditions this is valid for $X_n$ a vector of $n$ i.i.d.\
observations from a smooth parametric model, with $\hat\q_n$ the 
maximum likelihood estimator and $\Sigma_\q$ the inverse Fisher information matrix.
More generally, this is true in the case of a local approximation by a Gaussian shift experiment
(\cite{LeCam72,vandervaartART}).
The Bernstein-von Mises theorem can be used to show that Bayesian and frequentist inference (testing and confidence sets)
\emph{merge} for large sample sizes. In this section we investigate this for tolerance intervals.

We shall show that Bayesian tolerance regions $\R_n(X_n)$ such that the functions 
\begin{equation}
\label{EqRescaledQ}
h\mapsto Q_{\hat\q_n(X_n)+h/\sqrt n}\bigl(\R_n(X_n)\bigr),\qquad n=1,2,\ldots,
\end{equation}
stabilize asymptotically to a \emph{deterministic} function are asymptotically frequentist
tolerance regions, for any given loss function $\ell$ and level $\a$. The crux of this
\emph{stability condition}
is that the randomness which enters the functions \eqref{EqRescaledQ} through $X_n$
in $\hat\q_n(X_n)$ asymptotically cancels the randomness which enters through $X_n$ in $\R_n(X_n)$: 
the Bayesian tolerance regions $\R_n(X_n)$ should 
be ``asymptotically pivotal'' with respect to the estimators $\hat\q_n$. 
Some type of stability condition appears to be necessary, 
because the shape of a Bayesian tolerance region is left free by its definition. 

An informal proof of the frequentist validity of Bayesian tolerance regions is as follows.
Replacing the posterior distribution in \eqref{EqBayesianellTolerance} by its normal
approximation \eqref{EqBvM} from the Bernstein-von Mises theorem, we find that
\begin{equation}
\label{EqAsDefTolEll}
\int \ell\bigl[Q_\vartheta\bigl(\R_n(X_n)\bigr)\bigr]\,dN\Bigl(\hat\q_n,\frac1n\Sigma_\q\Bigr)(\vartheta)\doteq
1-\a.
\end{equation}
By the  substitution $\vartheta=\hat\q_n+h/\sqrt n$ this can be rewritten in the form
\begin{equation}
\label{EqAsDefTolEllRescaled}
\int \ell\bigl[Q_{\hat\q_n+h/\sqrt n}\bigl(\R_n(X_n)\bigr)\bigr]\,dN\bigl(0,\Sigma_\q\bigr)(h)\doteq
1-\a.
\end{equation}
By the stability assumption the integrand 
\begin{equation}
\label{EqDefgn}
h\mapsto g_n(h;X_n):=\ell\bigl[Q_{\hat\q_n+h/\sqrt n}\bigl(\R_n(X_n)\bigr)\bigr]
\end{equation}
in this expression is asymptotically
the same as a deterministic function $h\mapsto g_\infty(h)$. In view of
\eqref{EqANMLE} the integral in \eqref{EqAsDefTolEllRescaled} is then approximately equal to 
$\E_\q g_\infty\bigl(\sqrt n(\q-\hat\q_n)\bigr)$, which in turn, again by stability, is asymptotically the same
as $\E_\q g_n\bigl(\sqrt n(\q-\hat\q_n); X_n\bigr)$, or
$$\E_\q \ell\bigl[Q_{\hat\q_n+\sqrt n(\q-\hat\q_n)/\sqrt n}\bigl(\R_n(X_n)\bigr)\bigr]
=\E_\q \ell\bigl[Q_{\q}\bigl(\R_n(X_n)\bigr)\bigr].$$ 
Thus the final expression, which is the frequentist level of the tolerance
region $\R_n(X_n)$, is asymptotically equal to $1-\a$.

For an $(\d,\a)$-tolerance region,  $\ell\bigl(Q_\q(\R_n(X_n)\bigr)$ is the
indicator of the set $\hat G_n=\{\q: Q_\q\bigl(\R_n(X_n)\bigr)\ge 1-\d\}$ and 
the function \eqref{EqDefgn} is the indicator of the set 
$$\hat H_n=\sqrt n(\hat G_n-\hat\q_n).$$
Thus the stability condition is that the latter sets approximate to a deterministic set,
as $n\ra\infty$. Condition \eqref{EqAsDefTolEllRescaled} becomes
\begin{equation}
\label{EqAsDefTolEllRescaledAs} 
N(0,\Sigma_\q)(\hat H_n)\doteq 1-\a.
\end{equation}
This equality allows to ``solve'' one aspect of the sets $\hat H_n$; in general additional constraints
will be imposed to define their shape. As the normal distribution in this display is fixed,
it is not unnatural that these constraints would render the sets $\hat H_n$ also to become fixed,
in the limit: stability is natural.

The following theorem makes the preceding rigorous. We shall verify its conditions for
normal prediction variables in the next section.

\begin{theorem}
Suppose that \eqref{EqBvM}--\eqref{EqANMLE} hold, the loss function $\ell$ is bounded,
and suppose that there exist (deterministic) functions $f_{n,1}, f_{n,2}:\RR^d\to \RR$ with the property
that $f_{n,i}(h_n)\ra f_\infty(h)$ for $i=1,2$ and some function $f_\infty$ and any sequence $h_n\ra h$ with limit $h$ in a set of
probability one under the normal distribution in \eqref{EqANMLE} and such that
\begin{equation}
\label{EqStabilitySandwich}
f_{n,1}(h)\le \ell\bigl[Q_{\hat\q_n+h/\sqrt n}\bigl(\R_n(X_n)\bigr)\bigr]\le f_{n,2}(h),\qquad h\in\RR^d.
\end{equation}
Then $\int \ell\bigl(Q_\q(\R_n(X_n))\bigr)\,d\Pi(\q\given X_n)\ra 1-\a\in(0,1)$ in probability implies that
$\E_\q\ell\bigl(Q_\q(\R_n(X_n))\bigr)\ra 1-\a$, as $n\ra\infty$, for every $\q$.
\end{theorem}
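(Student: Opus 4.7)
The plan is the one sketched informally in the excerpt: use the Bernstein--von Mises theorem \eqref{EqBvM} to swap the posterior for a centred normal, and then use the sandwich hypothesis \eqref{EqStabilitySandwich} to trap both the Bayesian and the frequentist expectations between two deterministic sequences that share the common limit $\int f_\infty\,dN(0,\Sigma_\q)$. Throughout I would assume, without loss of generality, that $f_{n,1}$ and $f_{n,2}$ have been truncated at the uniform bounds of $\ell$, so that all three quantities in \eqref{EqStabilitySandwich} are uniformly bounded; this preserves the sandwich and the pointwise convergence $f_{n,i}\to f_\infty$.

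On the Bayesian side, boundedness of $\ell$ combined with the total variation convergence in \eqref{EqBvM} yields
\[
\int \ell\bigl[Q_\vartheta\bigl(\R_n(X_n)\bigr)\bigr]\,d\Pi_n(\vartheta\given X_n)
-\int \ell\bigl[Q_\vartheta\bigl(\R_n(X_n)\bigr)\bigr]\,dN\bigl(\hat\q_n,\tfrac{1}{n}\Sigma_\q\bigr)(\vartheta)\prob 0.
\]
The substitution $\vartheta=\hat\q_n+h/\sqrt n$ rewrites the second integral as $\int g_n(h;X_n)\,dN(0,\Sigma_\q)(h)$, which by \eqref{EqStabilitySandwich} lies in the deterministic interval $\bigl[\int f_{n,1}\,dN(0,\Sigma_\q),\,\int f_{n,2}\,dN(0,\Sigma_\q)\bigr]$. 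Since $f_{n,i}(h)\to f_\infty(h)$ at $N(0,\Sigma_\q)$-almost every $h$ and the truncated $f_{n,i}$ are uniformly bounded, dominated convergence forces both endpoints to $\int f_\infty\,dN(0,\Sigma_\q)$. Uniqueness of probability limits, together with the hypothesis, then identifies $\int f_\infty\,dN(0,\Sigma_\q)=1-\a$.

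On the frequentist side, I would write $\ell\bigl[Q_\q\bigl(\R_n(X_n)\bigr)\bigr]=g_n(H_n;X_n)$ with $H_n:=\sqrt n(\q-\hat\q_n)$, so that \eqref{EqANMLE} gives $H_n\weak H\sim N(0,\Sigma_\q)$ under $P_\q$. The sandwich then yields
\[
\E_\q f_{n,1}(H_n)\le \E_\q\ell\bigl[Q_\q\bigl(\R_n(X_n)\bigr)\bigr]\le \E_\q f_{n,2}(H_n),
\]
and the extended continuous mapping theorem (using $f_{n,i}(h_n)\to f_\infty(h)$ along $h_n\to h$ in a set of full $N(0,\Sigma_\q)$-measure) together with the bounded convergence theorem gives $\E_\q f_{n,i}(H_n)\to \E f_\infty(H)=1-\a$ for $i=1,2$. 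The squeeze yields $\E_\q\ell\bigl[Q_\q\bigl(\R_n(X_n)\bigr)\bigr]\to 1-\a$, as required.

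The only delicate point is verifying that the deterministic sandwich really does pin both integrals to the same limit; once the stability condition traps $g_n(\cdot;X_n)$ between two deterministic, convergent functions, the common limit $\int f_\infty\,dN(0,\Sigma_\q)$ is delivered both to the posterior (via its normal approximation) and to the frequentist law of $H_n$, and the remainder is a routine combination of BvM, change of variables, and extended continuous mapping.
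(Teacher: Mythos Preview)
Your proposal is correct and follows essentially the same route as the paper's proof: truncate the $f_{n,i}$ to make them uniformly bounded, use \eqref{EqBvM} plus the change of variables $\vartheta=\hat\q_n+h/\sqrt n$ to reduce the posterior integral to $\int g_n(h;X_n)\,dN(0,\Sigma_\q)(h)$, sandwich by $\int f_{n,i}\,dN(0,\Sigma_\q)$ to identify the limit as $\int f_\infty\,dN(0,\Sigma_\q)=1-\a$, and on the frequentist side apply the extended continuous mapping theorem to $H_n=\sqrt n(\q-\hat\q_n)$ together with the same sandwich. The paper's argument is organized in the same order and invokes the same tools; your separation into a ``Bayesian side'' and a ``frequentist side'' and your explicit mention of dominated convergence for the fixed-measure integral are purely presentational differences.
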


\begin{proof}
We may assume without loss of generality that the functions $f_{n,i}$ are uniformly bounded.
Then the condition $f_{n,i}(h_n)\ra f_\infty(h)$ for every sequence $h_n\ra h$ implies that
$\E f_{n,i}(Y_n)\ra\E f_\infty(Y)$, whenever the sequence of random vectors $Y_n$ tends in distribution
to the random vector $Y$, in view of the extended continuous mapping theorem (see \cite{vanderVaartandWellner(1996)},
Theorem~1.11.1). Thus by \eqref{EqANMLE}, for $i=1,2$,
$$\E_\q f_{n,i}\bigl(\sqrt n(\q-\hat\q_n)\bigr)\ra \int f_\infty\,dN(0,\Sigma_\q).$$
By assumption the function $g_n$ given in \eqref{EqDefgn} is sandwiched between
$f_{n,1}$ and $f_{n,2}$. Therefore
$\E_\q\ell\bigl(Q_\q(\R_n(X_n))\bigr)=\E_\q g_n\bigl(\sqrt n(\q-\hat\q_n); X_n\bigr)$
tends to the same limit.

By \eqref{EqBvM} and the definition of $g_n$ (see \eqref{EqAsDefTolEll}-\eqref{EqAsDefTolEllRescaled}), we have
$$\int \ell\bigl(Q_\q(\R_n(X_n))\bigr)\,d\Pi(\q\given X_n)=\int g_n(h; X_n)\,dN(0,\Sigma_\q)(h)+o_P(1).$$
Again by sandwiching of $g_n(h;X_n)$ this is asymptotic to 
$\int f_{n,i}\,dN(0,\Sigma_\q)$, and hence tends to $\int f_\infty\,dN(0,\Sigma_\q)$.
If the left side of the display tends to $1-\a$, as assumed, then it follows that
$\int f_\infty\,dN(0,\Sigma_\q)=1-\a$. The theorem follows by combining this with the preceding paragraph.
\end{proof}

\subsection{Normal predictions}
An $(\d,\a)$-tolerance interval $\R_n(X_n)=[A_n-B_n, A_n+B_n]$ for a one-dimensional Gaussian variable
$Z\sim N(\n,\t^2)$ is the  base (the section at $\t=0$) of a set of the form
$$G_{A,B,\d}=\Bigl\{\q=(\n,\t): \Phi\Bigl(\frac{A+B-\n}\t\Bigr)-\Phi\Bigl(\frac{A-B-\n}\t\Bigr)\ge 1-\d\Bigr\}.$$
The values $A_n$ and $B_n$ are determined so that $\Pi(G_{A_n,B_n,\d}\given X_n)\ge1-\a$,
for $\Pi(\cdot\given X_n)$ the posterior distribution of $\q=(\n,\t)$, and so that the length $2B_n$ of the interval
is minimal.

Under \eqref{EqBvM} the posterior distribution contracts  (at rate $1/\sqrt n$) to the Dirac measure at $\hat\q_n$, which tends to
the true value of the parameter $(\n,\t)$ under \eqref{EqANMLE}. Hence the equation
$\Pi(G_{A_n,B_n,\d}\given X_n)\ge1-\a$ forces that any ball of fixed radius around this true value intersects $G_{A_n,B_n,\d}$ with probability
tending to one. The minimality of $B_n$ implies that the horizontal locations $A_n$ of the latter sets must tend to the true value of $\n$. Thus 
$A_n-\hat\n_n\ra0$ in probability and hence
$\Phi(B_n/\hat\t_n)-\Phi(-B_n/\hat\t_n)\ra  1-\d$, whence $B_n/\hat\t_n\ra \xi_{\d/2}$, for $\xi_\d$ the upper $\d$-quantile of the
standard normal distribution.

The function $\q\mapsto \ell\bigl(Q_\q([A-B,A+B])\bigr)$ corresponding to the $(\d,\a)$-tolerance interval is the indicator of the set $G_{A,B,\d}$, 
and the stability condition \eqref{EqStabilitySandwich} is that the (indicator functions) of the sets $\hat H_n=\sqrt n(G_{A_n,B_n,\d}-\hat\q)$
are asymptotically deterministic. These sets can be written $\hat H_n=\{(g,h): K_n(g/\sqrt n, h/\sqrt n)\ge 0\}$,
for the stochastic processes
\begin{equation}
\label{EqDefKn}
K_n(g,h)=\Phi\Bigl(\frac{A_n+B_n-\hat\n_n-g}{\hat\t_n+h}\Bigr)-\Phi\Bigl(\frac{A_n-B_n-\hat\n_n-g}{\hat\t_n+h}\Bigr)- (1-\d).
\end{equation}
By a second-order Taylor expansion we see that these processes satisfy the expansion \eqref{EqExpansionK} in Lemma~\ref{LemmaExpansionK} (below),
with
\begin{align*}
\hat a_n=K_n(0,0)&=\Phi\Bigl(\frac{A_n+B_n-\hat\n_n}{\hat\t_n}\Bigr)-\Phi\Bigl(\frac{A_n-B_n-\hat\n_n}{\hat\t_n}\Bigr)- (1-\d),\\
\hat b_n=-\Bigl(\frac{\partial}{\partial h}K_n\Bigr)(0,0)
&=\psi\Bigl(\frac{A_n+B_n-\hat\n_n}{\hat\t_n}\Bigr)\frac1{\hat\t_n}-\psi\Bigl(\frac{A_n-B_n-\hat\n_n}{\hat\t_n}\Bigr)\frac1{\hat\t_n},\\
V(g,h)&=h.
\end{align*}
Here $\psi(x)=\phi(x)x=-\phi'(x)$. (Note that the partial derivatives
$-\phi((A_n+B_n-\hat\n_n)/\hat\t_n)/\hat\t_n+\phi((A_n-B_n-\hat\n_n)/\hat\t_n)/\hat\t_n$ of $K_n$ at $(0,0)$ relative to its first argument $g$ tend to zero.)
Since $A_n$, $B_n$, $\hat\n_n$ and $\hat\t_n$ tend in probability to nontrivial limits, the conditions of 
Lemma~\ref{LemmaExpansionK} are satisfied and hence the sets $\hat H_n$ are asymptotically sandwiched between
pairs of deterministic sets. Functions $f_{n,i}$ as in \eqref{EqStabilitySandwich} can be constructed from these sets
by letting $\e\ra 0$ and $M\ra\infty$ slowly with $n$. Thus the conditions of Theorem~\ref{EqStabilitySandwich} are satisfied 
and we obtain the following corollary.

\begin{corollary}
\label{CorollaryAsymptotic}
If the posterior distribution of $\q=(\n,\t)$ given $X_n$ satisfies \eqref{EqBvM}--\eqref{EqANMLE}, 
then the Bayesian $(\d,\a)$-tolerance interval $[A_n-B_n,A_n+B_n]$ of minimal length 
for a future variable $Z\given X_n,\n,\t\sim N(\n,\t^2)$ is an asymptotic frequentist $(\d,\a)$-tolerance set.
\end{corollary}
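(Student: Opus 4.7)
The plan is to derive this corollary as a direct application of the preceding theorem, whose hypotheses I would verify for the specific case of a normal future variable. The loss function here is $\ell(q)=1\{q\ge 1-\d\}$, which is bounded, so the content reduces to checking the stability sandwich condition \eqref{EqStabilitySandwich} for the Bayesian optimal interval $\R_n(X_n)=[A_n-B_n,A_n+B_n]$. The map $\vartheta\mapsto\ell[Q_\vartheta(\R_n(X_n))]$ is the indicator of $G_{A_n,B_n,\d}$, so after rescaling $\vartheta=\hat\q_n+h/\sqrt n$, the integrand in question is the indicator of the random set $\hat H_n=\sqrt n(G_{A_n,B_n,\d}-\hat\q_n)$. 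Stability thus amounts to showing $\hat H_n$ is asymptotically sandwiched between two deterministic sets.

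First I would pin down the asymptotic behaviour of $(A_n,B_n)$. Because \eqref{EqBvM}--\eqref{EqANMLE} force the posterior to contract at rate $1/\sqrt n$ around $\hat\q_n$, which in turn consistently estimates $(\n,\t)$, the constraint $\Pi(G_{A_n,B_n,\d}\given X_n)\ge 1-\a$ together with the length minimality of $B_n$ forces $A_n-\hat\n_n\prob 0$ and $B_n/\hat\t_n\prob \xi_{\d/2}$, where $\xi_{\d/2}$ is the upper $(\d/2)$-quantile of the standard normal. The latter comes from noting that at $h=0$ the defining probability $\Phi(B_n/\hat\t_n)-\Phi(-B_n/\hat\t_n)$ must converge to $1-\d$. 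These are precisely the identifications stated just above the corollary.

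Next I would establish the sandwich via a second-order Taylor expansion of the process $K_n(g,h)$ in \eqref{EqDefKn} around $(0,0)$. The crux is that the partial derivative with respect to $g$ at $(0,0)$ equals $-\phi((A_n+B_n-\hat\n_n)/\hat\t_n)/\hat\t_n+\phi((A_n-B_n-\hat\n_n)/\hat\t_n)/\hat\t_n$, and this tends to zero because $A_n-\hat\n_n\prob 0$ renders the two density evaluations asymptotically equal. Consequently the linear behaviour of $K_n$ is dominated by the $h$-coordinate, with asymptotically deterministic coefficients $\hat a_n$ and $\hat b_n$ converging to computable limits. This reduces $\hat H_n$ to a half-space-like region whose boundary stabilises. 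I would then invoke Lemma~\ref{LemmaExpansionK} to formalise this as a two-sided sandwich of $\hat H_n$ by deterministic sets enlarged or contracted by parameters $\e\da 0$ and $M\ua\infty$, chosen slowly with $n$.

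The main obstacle will be translating the set-level sandwich into actual functions $f_{n,1},f_{n,2}$ satisfying the pointwise convergence $f_{n,i}(h_n)\to f_\infty(h)$ for every $h_n\to h$ with $h$ in a $N(0,\Sigma_\q)$-full set. Taking $f_{n,i}$ to be the indicators of the inner and outer deterministic sandwich sets handles the inequalities in \eqref{EqStabilitySandwich}, but one must tune $\e_n\da 0$ and $M_n\ua\infty$ sufficiently slowly that the limit $f_\infty$ is the indicator of a single limiting half-space-like region. Its boundary is a hyperplane, which is a null set under the absolutely continuous normal law, so the almost-everywhere requirement of the theorem is automatic and the pointwise convergence on non-boundary points follows from the continuity of $\Phi$. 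Once the $f_{n,i}$ are in place the theorem applies, and the Bayesian level $1-\a$ transfers to an asymptotic frequentist level $1-\a$, yielding the corollary.
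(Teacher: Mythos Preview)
Your proposal is correct and follows essentially the same route as the paper: identify $\ell$ as the indicator of $G_{A_n,B_n,\d}$, establish $A_n-\hat\n_n\prob 0$ and $B_n/\hat\t_n\prob\xi_{\d/2}$ from posterior contraction and length minimality, Taylor-expand $K_n$ to see that the $g$-derivative vanishes asymptotically so that Lemma~\ref{LemmaExpansionK} applies with $V(g,h)=h$, and then build the sandwiching functions $f_{n,i}$ as indicators of the deterministic sets from that lemma with $\e\da0$ and $M\ua\infty$ slowly. Your explicit remark that the limiting boundary is a hyperplane of $N(0,\Sigma_\q)$-measure zero, which secures the almost-everywhere convergence required by the theorem, is a point the paper leaves implicit.
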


The convergence $A_n-\hat\n_n\ra 0$ in probability means that the tolerance intervals are asymptotically 
centered at the (asymptotic) posterior mean. If the posterior distribution of $\q$ is exactly normal $N\bigl((\hat\n_n,\hat\t_n),\Sigma_\q\bigr)$
with a diagonal covariance matrix, then $\E(\n\given X_n,\t)$ is free of $\t$ and Lemma~\ref{LemmaToleranceIntervalAtPosteriorMean}
shows that the tolerance interval is centered exactly at the posterior mean. 

For a non-diagonal matrix $\Sigma_\q$ this is not necessarily true, and in general the normal distribution will also be an approximation only.
The approximation $A_n-\hat\n_n\ra 0$ can in general be improved to order $n^{-1/4}$. The following lemma also gives an asymptotic expression for the
half length $B_n$ of the interval.

\begin{lemma}
The centers and half lengths of the Bayesian $(\d,\a)$-tolerance intervals $[A_n-B_n,A_n+B_n]$ of minimal length
in Corollary~\ref{CorollaryAsymptotic} satisfy $A_n=\hat\n_n+o_P(n^{-1/4})$ and 
$B_n=\hat\t_n\xi_{\d/2}+\xi_\a\xi_{\d/2}\sqrt{\Sigma_{\q,2,2}}n^{-1/2}+o_P(n^{-1/2})$.
\end{lemma}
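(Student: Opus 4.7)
The plan is to combine the Bernstein--von Mises approximation \eqref{EqBvM} with a local Taylor expansion of the function $(\n,\t)\mapsto Q_{\n,\t}[A-B,A+B]$ around the reference point $(A,B,\n,\t)=(\hat\n_n,\hat\t_n\xi_{\d/2},\hat\n_n,\hat\t_n)$, which the discussion preceding Corollary~\ref{CorollaryAsymptotic} has already identified as the common limit of the optimizer and the true parameter.

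First I would use \eqref{EqBvM} to replace the posterior in the defining equality $\Pi(G_{A_n,B_n,\d}\given X_n)=1-\a$ by $N(\hat\q_n,\Sigma_\q/n)$, pass to the rescaled Gaussian variable $h=(g,\tilde h)=\sqrt n(\q-\hat\q_n)\sim N(0,\Sigma_\q)$, and parametrize locally $A_n=\hat\n_n+a/\sqrt n$ and $B_n=\hat\t_n\xi_{\d/2}+b/\sqrt n$ with scalar $a,b$ to be determined. Inserting this into the arguments of $\Phi$ in the definition of $G_{A,B,\d}$ and expanding to second order in $1/\sqrt n$ (using $\phi$ even, $\phi'(x)=-x\phi(x)$, and the cancellation of the $g$-derivative noted just after \eqref{EqDefKn}), membership in $G_{A_n,B_n,\d}$ becomes equivalent, up to a remainder of order $1/n$, to
\begin{equation*}
b \;\ge\; \xi_{\d/2}\,\tilde h \;+\; \frac{\xi_{\d/2}}{2\hat\t_n\sqrt n}\bigl[(a-g)^2+\xi_{\d/2}^2\tilde h^2\bigr].
\end{equation*}

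Next I would extract the asymptotics of $b$ by imposing that the $N(0,\Sigma_\q)$-probability of this event equals $1-\a$. To leading order this collapses to $P(\tilde h\le b/\xi_{\d/2})=1-\a$, giving $b=\xi_\a\xi_{\d/2}\sqrt{\Sigma_{\q,2,2}}+o_P(1)$ and hence the stated expansion for $B_n$. For the minimization over $a$, all $a$-dependent contributions to $b(a)$ at the next order arise from quadratic expressions in $a$ (possibly convolved with the conditional law of $g$ given $\tilde h$), so they assemble into a convex quadratic in $a$ with some deterministic bounded minimizer $a_\ast$ (essentially the conditional mean of $g$ given $\tilde h=\xi_\a\sqrt{\Sigma_{\q,2,2}}$, possibly shifted by a similar constant). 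Combining the optimality $b(\hat a)\le b(a_\ast)$ with the $o_P(1/\sqrt n)$ remainder in the expansion yields $(\hat a-a_\ast)^2=o_P(1)$, whence $\hat a=O_P(1)$ and $A_n-\hat\n_n=\hat a/\sqrt n=O_P(n^{-1/2})=o_P(n^{-1/4})$.

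The delicate issue will be making everything rigorous uniformly in the random scalar $a$. Since \eqref{EqBvM} only controls posterior probabilities of \emph{deterministic} events (via total variation), one first needs a preliminary consistency step confining $(A_n,B_n)$ to a shrinking deterministic neighbourhood of $(\hat\n_n,\hat\t_n\xi_{\d/2})$; this is already sketched in the paragraph preceding Corollary~\ref{CorollaryAsymptotic} and reduces the Taylor remainders to the order claimed above. A further technical point is justifying the quantile shift formula $q_{1-\a}(\xi_{\d/2}\tilde h+R_n)=q_{1-\a}(\xi_{\d/2}\tilde h)+E[R_n\mid \xi_{\d/2}\tilde h=q_{1-\a}(\xi_{\d/2}\tilde h)]+o_P(1/\sqrt n)$ for $R_n(a)=O_P(1/\sqrt n)$, which follows from a standard smoothness/uniform-integrability argument once the remainder is controlled uniformly in $a$ over a bounded range.
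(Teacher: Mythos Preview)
Your overall approach---replace the posterior by its Bernstein--von Mises normal approximation, rescale, and Taylor-expand the membership condition---is the same as the paper's, and the $B_n$ expansion you obtain is correct and matches the paper's conclusion.

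The gap is in your treatment of $A_n$. You parametrize $A_n=\hat\n_n+a/\sqrt n$ and then argue that optimality $b(\hat a)\le b(a_\ast)$, together with the $o_P(1/\sqrt n)$ remainder in your expansion, forces $(\hat a-a_\ast)^2=o_P(1)$, hence $\hat a=O_P(1)$. But the expansion with $o_P(1/\sqrt n)$ remainder is an expansion of the \emph{normal-approximation} version $b_G(a)$, whereas the optimality you invoke is for the \emph{true-posterior} version $b_F(a)$. The only link between them is the Bernstein--von Mises bound \eqref{EqBvM}, which gives $\sup_{A,B}|F_n-G_n|=o_P(1)$ in total variation; translated to the $b$-scale this yields only $|b_F(a)-b_G(a)|=o_P(1)$, not $o_P(1/\sqrt n)$. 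Consequently the comparison $b_F(\hat a)\le b_F(a_\ast)$ gives
\[
\frac{c}{\sqrt n}\,(\hat a-a_\ast)^2 \;\le\; o_P(1),
\]
i.e.\ $(\hat a-a_\ast)^2=o_P(\sqrt n)$ and $A_n-\hat\n_n=o_P(n^{-1/4})$, which is exactly the claimed rate---but not the stronger $O_P(n^{-1/2})$ you assert. The ``quantile shift formula'' you flag as a technical point is precisely where this mismatch bites: justifying it at order $1/\sqrt n$ would require a BvM approximation that is $o_P(n^{-1/2})$-accurate, which is not assumed.

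The paper avoids this difficulty by never parametrizing $A$ at scale $n^{-1/2}$. It keeps $A$ in a generic shrinking interval, approximates the posterior probability $F_n(A,B)=\Pi(G_{A,B,\d}\given X_n)$ uniformly by an explicit $G_n(A,B)$ (a half-space probability under $N(0,\Sigma_\q)$), and then invokes an abstract comparison lemma (Lemma~\ref{LemmaCompareFnGn}): since $A\mapsto G_n(A,B)$ is maximized at $A=\hat\n_n$ with a quadratic decay $\sqrt n(A-\hat\n_n)^2$ and $\sup|F_n-G_n|\to 0$, the argmax of $F_n$ must lie within $o_P(n^{-1/4})$ of $\hat\n_n$. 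This is where the unusual rate $n^{-1/4}$ genuinely comes from: it is the scale at which the $O_P(1)$ BvM error balances the $\sqrt n(A-\hat\n_n)^2$ curvature.
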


\begin{proof}
Since $A_n-\hat\n_n\ra 0$ and $B_n\ra \xi_{\d/2}$, there exist intervals $I_n$ and $J_n$ around $\n_0$ and $\t_0\xi_{\d/2}$,
for $(\n_0,\t_0)$ the true value of $\q$, that shrink
to these points that contain $A_n$ and $B_n$ with probability tending to one. Define functions $F_n, G_n: I_n\times J_n\to \RR$ by
\begin{align*}
F_n(A,B)&=\Pi(G_{A,B,\d}\given X_n),\\
G_n(A,B)&=N(0,\Sigma_\q)\Bigl\{(g,h): h\le \frac {\sqrt n K_n(0,0;A,B)}{2\psi(\xi_{\d/2})/\hat\t_n}\Bigr\},
\end{align*}
Here $K_n(h,g;A,B)$ is the expression on the right side of \eqref{EqDefKn}, but  with $(A_n,B_n)$ replaced by a generic $(A,B)$,
and $\psi(x)=x\phi(x)$. The values $(A_n,B_n)$ are determined so that $B_n$ is the smallest value so that there exists 
$A_n$ such that $F_n(A_n,B_n)\ge 1-\a$. In other words $B_n$ is the minimum of $\{B: \sup_A F_n(A,B)\ge 1-\a\}$ and
$A_n$ is the point of maximum of $A\mapsto F_n(A,B_n)$. 
We shall show that the functions $F_n$ and $G_n$ satisfy the conditions of Lemma~\ref{LemmaCompareFnGn}
with $c_n=\hat\t_n\xi_{\d/2}$ and $\xi(\a)=\xi_\a\xi_{\d/2}\sqrt{\Sigma_{\q,2,2}}$, whence the lemma follows from that lemma.

In view of \eqref{EqBvM} we have $\sup_{A,B}\bigl|F_n(A,B)-N(0,\Sigma_\q)(\hat H_{A,B,\d})\bigr|\ra 0$, for
$\hat H_{A,B,\d}=\bigl\{(g,h): K_n(g/\sqrt n,h/\sqrt n;A,B)\ge0\bigr\}$. The supremum here and in the following is taken
over $(A,B)\in I_n\times J_n$.
By a second-order Taylor expansion, for functions $\h_{n,1}$ and $\h_{n,2}$ with $\sup_{A,B}\h_{n,i}(A,B)\ra 0$ and a constant $b$
independent of $(A,B)$,
\begin{align*}
&\sqrt n\biggl| K_n\Bigl(\frac g{\sqrt n},\frac h{\sqrt n};A,B\Bigr)-K_n(0,0;A,B)\\
&\qquad-\h_{n,1}(A,B)\frac g{\sqrt n}-\Bigl(-\frac{2\psi(\xi_{\d/2})}{\hat\t_n}+\h_{n,2}(A,B)\Bigr)\frac h{\sqrt n}\biggr|
\le b \frac{g^2+h^2}{\sqrt n}.
\end{align*}
By a sandwiching argument as in the proof of Lemma~\ref{LemmaExpansionK}, this shows that, with $C_n=\{(g,h): g^2+h^2\le L_n\}$ and $L_n\ra\infty$ sufficiently slowly
$$\sup_{A,B}\Bigl|N(0,\Sigma_\q)(\hat H_{A,B,\d}\cap C_n)
-N(0,\Sigma_\q)\Bigl\{(g,h)\in C_n: h\le \frac {\sqrt nK_n(0,0;A,B)}{2\psi(\xi_{\d/2})/\hat\t_n}\Bigr\}\Bigr|\ra 0.$$
Because $N(0,\Sigma_\q)( C_n^c)\ra 0$, this is then also true without intersecting the sets by $C_n$. This finishes
the proof that $\sup_{A,B}\bigl|F_n(A,B)-G_n(A,B)\bigr|\ra 0$.

By the unimodality and symmetry of the normal distribution, the map $A\mapsto K_n(0,0;A,B)$  has a maximum at $A=\hat\n_n$, for every $B$,
and hence the same is true for $A\mapsto G_n(A,B)$.
It is elementary to verify by several Taylor expansions that $G_n(\hat\n_n, \hat B)\ra 1-\a$, for 
$\sqrt n(\Phi(\hat B/\hat\t_n)-1+\d/2)\hat \t_n\ra \xi_\a\psi(\xi_{\d/2})\sqrt{\Sigma_{\q,2,2}}$, or 
$\hat B=\hat\t_n\xi_{\d/2}+\xi_\a\xi_{\d/2}\sqrt{\Sigma_{\q,2,2}} /\sqrt n$.

Finally, since $\frac{\partial}{\partial A} K_n(0,0;\hat\n_n,B)=0$,
again by Taylor expansion there exist functions $\h_{n,3}$ with the property $\sup_{A}\h_{n,3}(A,\hat B)\ra 0$ such that
$$K_n(0,0;A,\hat B)\le K_n(0,0;\hat\n_n,\hat B)+(A-\hat\n_n)^2\bigl(\phi'(\hat B/\hat\t_n)/\hat\t_n^2+\h_{n,3}(A,\hat B)\bigr).$$
As $\phi'(x)<0$, for $x>0$, we obtain with probability tending to one that $\sqrt n K_n(0,0;A,\hat B)$ is smaller than $\sqrt n K_n(0,0;\hat\n_n,\hat B)-\e$
for some $\e>0$ if $\sqrt n(A-\hat\n_n)^2>\d$ for some $\d>0$.
On this event $$G_n(A,\hat B)=N(0,\Sigma_\q)\Bigl\{(g,h): h\le \frac {\sqrt nK_n(0,0;A,\hat B)}{2\psi(\xi_{\d/2})/\hat\t_n}\Bigr\}$$
is strictly smaller than its asymptotic value $1-\a$ at $A=\hat\n_n$.
This verifies the last displayed condition of Lemma~\ref{LemmaCompareFnGn}.
\end{proof}

\begin{lemma}
\label{LemmaExpansionK}
Suppose that for every $M>0$ the stochastic processes $(K_n(h): h\in\RR^d)$ satisfy
\begin{equation}
\label{EqExpansionK}
\sup_{\|h\|\le M/\sqrt n}\sqrt n\bigl| K_n(h)-\hat a_n+\hat b_n Vh\bigr|\prob 0,
\end{equation}
for random variables $\hat a_n$ and $\hat b_n>0$ such that $\hat b_n^{-1}$
 is bounded in probability, and a linear map $V: \RR^d\to \RR$. 
If the sets $\hat H_n=\{h\in\RR^d: K_n(h/\sqrt n)\ge0\}$ satisfy
$N(0,\Sigma)(\hat H_n)\ra 1-\a\in(0,1)$, then  $\sqrt n\,\hat a_n/\hat b_n\ra \xi_\a \sqrt{V\Sigma V^T}$ and
for every $\e, M>0$, with probability tending to 1,
\begin{align*}
\bigl\{h\in \RR^d: \|h\|\le M,Vh&\le \xi_\a\sqrt{V\Sigma V^T}-\e\bigr\}\subset \hat H_n\subset\\
&\quad\subset \bigl\{h\in \RR^d: Vh\le \xi_\a\sqrt{V\Sigma V^T}+\e\text{ or } \|h\|>M\bigr\}.
\end{align*}
\end{lemma}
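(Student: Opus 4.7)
The plan is to translate \eqref{EqExpansionK} into a deterministic sandwich of $\hat H_n$ by half-spaces intersected with a ball, use the $N(0,\Sigma)$-measure constraint to pin down the random threshold $\hat t_n:=\sqrt n\,\hat a_n/\hat b_n$, and then substitute this limit back into the sandwich to obtain the asserted inclusions.

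After rescaling $h\mapsto h/\sqrt n$, the hypothesis reads $\r_n(M):=\sup_{\|h\|\le M}\bigl|\sqrt n K_n(h/\sqrt n)-(\sqrt n\,\hat a_n-\hat b_n Vh)\bigr|\prob 0$ for every $M$; since $\hat b_n^{-1}$ is bounded in probability, $\r_n(M)/\hat b_n\prob 0$ as well. Solving $\sqrt n K_n(h/\sqrt n)\ge0$ using the two-sided bound, for every $\e,M>0$ one obtains, with probability tending to $1$,
\begin{equation*}
\{h:\|h\|\le M,\,Vh\le\hat t_n-\e\}\,\subset\,\hat H_n\cap\{\|h\|\le M\}\,\subset\,\{h:\|h\|\le M,\,Vh\le\hat t_n+\e\}.
\end{equation*}

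Next, take $N(0,\Sigma)$-measures on both sides, using that $Vh\sim N(0,V\Sigma V^T)$ when $h\sim N(0,\Sigma)$; writing $\s:=\sqrt{V\Sigma V^T}$,
\begin{equation*}
\Phi\bigl((\hat t_n-\e)/\s\bigr)-N(0,\Sigma)\bigl(\|h\|>M\bigr)\,\le\, N(0,\Sigma)(\hat H_n)\,\le\, \Phi\bigl((\hat t_n+\e)/\s\bigr)+N(0,\Sigma)\bigl(\|h\|>M\bigr).
\end{equation*}
The Gaussian tail vanishes as $M\to\infty$. If $\hat t_n\to+\infty$ along a subsequence, the lower bound forces $\liminf N(0,\Sigma)(\hat H_n)=1$, contradicting $1-\a<1$; if $\hat t_n\to-\infty$, the upper bound forces $\limsup N(0,\Sigma)(\hat H_n)=0$, contradicting $1-\a>0$. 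Hence $\hat t_n$ is tight, and along any subsequence with $\hat t_n\to t^*$, sending $M\to\infty$ and then $\e\to 0$ gives $\Phi(t^*/\s)=1-\a$, i.e.\ $t^*=\xi_\a\s$. Since every subsequential limit equals $\xi_\a\s$, one concludes $\sqrt n\,\hat a_n/\hat b_n\prob\xi_\a\sqrt{V\Sigma V^T}$.

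Substituting $\hat t_n=\xi_\a\s+o_P(1)$ back into the earlier sandwich (absorbing the $o_P(1)$ correction into $\e$) yields the asserted inclusions inside $\{\|h\|\le M\}$; on the exterior $\{\|h\|>M\}$ the upper inclusion is automatic by the disjunctive ``or $\|h\|>M$'' clause in the statement, so no control of $K_n$ outside the ball is needed. The main obstacle is the tightness step: it requires both inclusions working in tandem with the strict bounds $0<1-\a<1$, and the order of limits ($M\to\infty$ first, $\e\to 0$ second) must be respected, which is legitimate because $\hat t_n$ depends on neither $M$ nor $\e$.
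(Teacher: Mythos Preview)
Your proposal is correct and follows essentially the same route as the paper's proof: derive a sandwich of $\hat H_n\cap\{\|h\|\le M\}$ between half-spaces with threshold $\hat t_n\pm o_P(1)$, take $N(0,\Sigma)$-measures, and use the constraint $N(0,\Sigma)(\hat H_n)\to 1-\a$ to identify $\hat t_n\to\xi_\a\sqrt{V\Sigma V^T}$, then substitute back. The only cosmetic difference is that the paper pins down $\hat t_n$ by directly bounding it between $\xi_{\a+\d_M}\sqrt{V\Sigma V^T}$ and $\xi_{\a-\d_M}\sqrt{V\Sigma V^T}$ (with $\d_M\downarrow 0$ as $M\to\infty$), whereas you argue tightness and identify the unique subsequential limit; these are equivalent standard devices.
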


\begin{proof}
Define $\e_n(h)=\sqrt n\big(K_n(h)-\hat a_n+\hat b_n Vh\bigr)$ and set $\hat\e_n=\sup_{\|h\|\le M/\sqrt n}|\e_n(h)|$,
for given $M$.
Then by assumption $\hat\e_n\ra 0$ in probability, and $\bigl| K_n(h/\sqrt n)-\hat a_n+\hat b_n Vh/\sqrt n\bigr|\le \hat\e_n$,
for every $h$ with $\|h\|\le M$. From the latter inequality we find that 
\begin{align*}
\|h\|\le M, K_n(h/\sqrt n )\ge 0&\implies \hat b_n\, Vh\le \sqrt n\hat a_n+\hat\e_n,\\
\|h\|\le M,  \hat b_n\,Vh\le \sqrt n\hat a_n-\hat\e_n&\implies K_n(h/\sqrt n )\ge 0.
\end{align*}
This implies that
$$\bigl\{\|h\|\le M,Vh\le (\sqrt n\hat a_n-\hat\e_n)/\hat b_n\Bigr\}\subset \hat H_n
\subset \bigl\{Vh\le (\sqrt n\hat a_n+\hat\e_n)/\hat b_n\text{ or } \|h\|>M\bigr\}.$$
Combining this with the fact that $N(0,\Sigma)(\hat H_n)\ra 1-\a\in(0,1)$ and the fact that
$Vh\sim N(0, V\Sigma V^T)$ if $h\sim N(0,\Sigma)$, we conclude that
there exists $\d_M>0$ such that $\d_M\ra 0$ as $M\ra\infty$ such that
$(\sqrt n\hat a_n-\hat\e_n)/\hat b_n\le \xi_{\a-\d_M}\sqrt{V\Sigma V^T}+o_P(1)$ and
$(\sqrt n\hat a_n+\hat\e_n)/\hat b_n\ge \xi_{\a+\d_M}\sqrt{V\Sigma V^T}+o_P(1)$, for every $M$.
Since $\hat b_n^{-1}=O_P(1)$ by assumption, we have $\hat\e_n/\hat b_n\ra0$ in probability, and 
hence $\sqrt n\hat a_n/\hat b_n=\xi_\a\sqrt{V\Sigma V^T}+o_P(1)$.
We substitute this in the last display to obtain the result of the lemma.
\end{proof}

\begin{lemma}
\label{LemmaCompareFnGn}
Let $F_n,G_n: I_n\times J_n\to \RR$ be stochastic processes indexed by rectangles $I_n\times  J_n\subset\RR^2$ that are nondecreasing in their second argument,
such that $$\sup_{A,B}|F_n(A,B)-G_n(A,B)|\prob0,$$ and such that for numbers $c_n\in J_n$ and continuous functions $\xi: (0,1)\to\RR$,
and every $\a\in (0,1)$,
\begin{align*}
\sup_A G_n(A,c_n+\xi(\a)/\sqrt n)=G_n(0,c_n+\xi(\a)/\sqrt n)&\prob  1-\a,\\
\Pr\Bigl(\sup_{A: |A|>\d_nn^{-1/4}}G_n(A,c_n+\xi(\a)/\sqrt n) <  1-\a,\Bigr)&\ra 1,\quad \text{ some } \d_n\ra0.
\end{align*}
Then $B_n(\a):=\inf(B: \sup_A F_n(A,B)\ge 1-\a\bigr)$ satisfies $B_n(\a)=c_n+\xi(\a)/\sqrt n+o_P(n^{-1/2})$
and $\argmax_A F_n(A,B_n)=o_P(n^{-1/4})$.
\end{lemma}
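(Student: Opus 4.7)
My plan attacks the two conclusions in sequence. Let $B^\ast_\a:=c_n+\xi(\a)/\sqrt n$ and $\hat A_n\in\argmax_A F_n(\cdot,B_n(\a))$.

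\emph{Stage 1 (rate of $B_n$).} For each fixed $\e>0$, I would apply the first hypothesis at $\a-\e$ and combine with the uniform approximation $\sup_{A,B}|F_n-G_n|\prob 0$ to get $\sup_A F_n(A,B^\ast_{\a-\e})\prob 1-(\a-\e)>1-\a$, which shows $B^\ast_{\a-\e}$ is eventually feasible for the infimum and hence $B_n(\a)\le B^\ast_{\a-\e}$ with probability tending to one. The mirror argument at $\a+\e$, together with the monotonicity of $\sup_A F_n(\cdot,B)$ in $B$, yields $B_n(\a)\ge B^\ast_{\a+\e}$ with probability tending to one. A standard diagonal choice $\e_n\downarrow 0$ slow enough to preserve both events, together with the continuity of $\xi$, sandwiches $\sqrt n\bigl(B_n-B^\ast_\a\bigr)$ between $\xi(\a+\e_n)-\xi(\a)$ and $\xi(\a-\e_n)-\xi(\a)$, both tending to zero. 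This gives $B_n=B^\ast_\a+o_P(n^{-1/2})$.

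\emph{Stage 2 (localization of $\hat A_n$).} By the infimum definition combined with right-continuity of $B\mapsto\sup_A F_n(A,B)$ (which holds throughout the application) and the uniform approximation, $\sup_A F_n(A,B_n)=1-\a+o_P(1)$ and hence $G_n(\hat A_n,B_n)=1-\a+o_P(1)$. Fix $\eta>0$ and consider the event $\{|\hat A_n|>\eta n^{-1/4}\}$. Monotonicity of $G_n$ in $B$ together with the Stage~1 bound $B_n\le B^\ast_{\a-\e_n}$ allows me to write $G_n(\hat A_n,B_n)\le G_n(\hat A_n,B^\ast_{\a-\e_n})$, and on the event in question the second hypothesis applied at level $\a-\e_n$ bounds the right-hand side strictly below $1-(\a-\e_n)=1-\a+\e_n$. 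To turn this into a genuine contradiction I would exploit the quantitative content of the second hypothesis implicit in the structure of $G_n$ for the application: $G_n$ is a Gaussian measure of a half-space whose boundary shifts quadratically in $A$ near the mode $A=0$, so the same Taylor expansion underlying Lemma~\ref{LemmaExpansionK} yields a lower bound $1-\a-G_n(A,B^\ast_\a)\ge c\eta^2+o_P(1)$ whenever $|A|\ge\eta n^{-1/4}$, uniformly in $n$, for some constant $c>0$. Combining with $G_n(\hat A_n,B_n)=G_n(\hat A_n,B^\ast_\a)+O_P(n^{-1/2})$ (from the monotonicity step together with the Stage~1 rate) forces $1-\a+o_P(1)\le 1-\a-c\eta^2+o_P(1)$, an impossibility for any fixed $\eta>0$. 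Hence $\Pr(|\hat A_n|>\eta n^{-1/4})\to 0$ for every $\eta$, i.e.\ $\hat A_n=o_P(n^{-1/4})$.

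\emph{Main obstacle.} The chief difficulty is precisely this transfer: the second hypothesis supplies only an \emph{unquantified} strict inequality $G_n(A,B^\ast_\a)<1-\a$ for $|A|>\d_n n^{-1/4}$, which is not by itself strong enough to outweigh the $o_P(1)$ slack introduced by the uniform approximation and the $o_P(n^{-1/2})$ fluctuation of $B_n$ around $B^\ast_\a$. The quadratic decay of $G_n$ in $A$ is the key quantitative input that saves the argument; it is an implicit consequence of the Gaussian-measure form of $G_n$ encountered in the application, and I would plan to verify it explicitly (via the same Taylor expansion as in Lemma~\ref{LemmaExpansionK}) when invoking the present lemma inside the proof of Corollary~\ref{CorollaryAsymptotic}, rather than weaving it into the abstract hypotheses.
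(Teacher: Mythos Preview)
Your Stage~1 argument is essentially the paper's: pass to $\bar F_n(B)=\sup_A F_n(A,B)$, transfer the first hypothesis from $G_n$ to $F_n$ via the uniform approximation, sandwich $B_n(\a)$ between $c_n+\xi(\a\pm\e)/\sqrt n$ using monotonicity, and conclude via continuity of $\xi$. This part is correct and matches the paper.

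For Stage~2 the two approaches diverge. The paper attempts to stay purely within the abstract hypotheses: from the Stage~1 upper bound $B_n(\a)\le c_n+\xi(\a_1)/\sqrt n$ for $\a_1<\a$ and monotonicity one gets $G_n(A,B_n(\a))\le G_n(A,c_n+\xi(\a_1)/\sqrt n)$, and the second hypothesis at level $\a_1$ then bounds $\sup_{|A|>\d_n n^{-1/4}} G_n(A,B_n(\a))$ by $1-\a_1$; the paper next asserts that this is ``strictly smaller than $1-\a$, for $\a_1$ close enough to $\a$'' and compares with $F_n(0,B_n(\a))\to 1-\a$ to localise the maximiser. But $\a_1<\a$ gives $1-\a_1>1-\a$, so that inequality goes the wrong way and the written argument does not close. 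You have put your finger on exactly this gap: an unquantified strict inequality in the second hypothesis cannot by itself absorb the $o_P(1)$ slack coming from $\sup|F_n-G_n|$ and from $B_n(\a)-B^\ast_\a$.

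Your remedy---importing the quantitative quadratic decay $G_n(0,B^\ast_\a)-G_n(A,B^\ast_\a)\ge c\,\eta^2+o_P(1)$ for $|A|\ge\eta n^{-1/4}$ from the explicit Gaussian half-space form of $G_n$ in the application---is sound and delivers the conclusion where the lemma is actually used; indeed the paper itself establishes precisely this quadratic bound (via the second-order expansion of $K_n(0,0;A,\hat B)$) when verifying the second hypothesis. In effect you are strengthening the second hypothesis to a quantified one and proving the lemma under that strengthening. This is a legitimate repair, and arguably some such strengthening is what the abstract statement needs; your plan to carry the extra input explicitly at the point of invocation rather than rewrite the hypotheses is a reasonable way to organise it.
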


\begin{proof}
The functions  $\bar F_n$ and $\bar G_n$ defined by $\bar F_n(B)=\sup_A F_n(A,B)$ and similarly for $G_n$
satisfy $\sup_B|\bar F_n(B)-\bar G_n(B)|\prob 0$. Combined with the first displayed assumption on $G_n$, this gives 
that $\bar F_n(c_n+\xi(\a)/\sqrt n)\prob 1-\a$, for every $\a\in(0,1)$. The definition of $B_n(\a)$ and monotonicity of $\bar F_n$
now readily give that $c_n+\xi(\a_2)/\sqrt n\le B_n(\a)\le c_n+\xi(\a_1)/\sqrt n$ for every $\a_1<\a<\a_2$, eventually with probability tending to one, 
or equivalently $\xi(\a_2)\le \sqrt n\bigl(B_n(\a)-c_n\bigr)\le \xi(\a_1)$, eventually. By the continuity of $\xi$ it follows that 
$\sqrt n\bigl(B_n(\a)-c_n\bigr)\prob \xi(\a)$.

By the uniform approximation of $F_n$ by $G_n$ and the second displayed assumption on $G_n$, we have that 
$$\sup_{|A|>\d_nn^{-1/4}}F_n\bigl(A,B_n(\a)\bigr)=\sup_{|A|>\d_n n^{-1/4}}G_n\bigl(A,B_n(\a)\bigr)+o_P(1).$$ 
By monotonicity $G_n(A,B_n(\a))\le G_n(A, c_n+\xi(\a_1)/\sqrt n)$, for every $\a_1<\a$, for every $A$. 
Thus the right side of the display is strictly smaller than $1-\a_1$, eventually,  by assumption, which is strictly
smaller than $1-\a$, for $\a_1$ close enough to $\a$. 
Similarly also $F_n\bigl(0,B_n(\a)\bigr)=G_n\bigl(0,B_n(\a)\bigr)+o_P(1)\ra 1-\a$. 
It follows that the maximum of $A\mapsto F_n\bigl(A,B_n(\a)\bigr)$ is taken on the interval $[-\d_n n^{-1/4}, \d_n n^{-1/4}]$.
\end{proof}

\bibliographystyle{plain}
\bibliography{BayesTI}

\begin{thebibliography}{10}

\bibitem{AitchisonBTR}
J.~Aitchison.
\newblock Two papers on the comparison of bayesian and frequentist approaches
  to statistical problems of prediction: Bayesian tolerance regions.
\newblock {\em Journal of the Royal Statistical Society. Series B
  (Methodological)}, 26(2):161--175, 1964.

\bibitem{AitchisonBTR2}
J.~Aitchison.
\newblock Expected-cover and linear-utility tolerance intervals.
\newblock {\em Journal of the Royal Statistical Society. Series B
  (Methodological)}, 28(1):57--62, 1966.

\bibitem{browne2006}
William~J. Browne and David Draper.
\newblock A comparison of bayesian and likelihood-based methods for fitting
  multilevel models.
\newblock {\em Bayesian Analysis}, 1(3):473--514, 09 2006.

\bibitem{Easterling1970}
Robert~G. Easterling and David~L. Weeks.
\newblock An accuracy criterion for bayesian tolerance intervals.
\newblock {\em Journal of the Royal Statistical Society. Series B
  (Methodological)}, 32(2):236--240, 1970.

\bibitem{Hamada2002}
Michael Hamada.
\newblock Bayesian tolerance interval control limits for attributes.
\newblock {\em Quality and Reliability Engineering International},
  18(1):45--52, 2002.

\bibitem{Hamada2004}
Michael Hamada, Valen Johnson, Leslie~M. Moore, and Joanne Wendelberger.
\newblock Bayesian prediction intervals and their relationship to tolerance
  intervals.
\newblock {\em Technometrics}, 46(4):452--459, 2004.

\bibitem{Katki2005}
Hormuzd~A. Katki, Eric~A. Engels, and Philip~S. Rosenberg.
\newblock Assessing uncertainty in reference intervals via tolerance intervals:
  application to a mixed model describing hiv infection.
\newblock {\em Statistics in Medicine}, 24(20):3185--3198, 2005.

\bibitem{krishnamoorthy2009}
K.~Krishnamoorthy and T.~Mathew.
\newblock {\em Statistical Tolerance Regions: Theory, Applications, and
  Computation}.
\newblock Wiley Series in Probability and Statistics. Wiley, 2009.

\bibitem{LeCam72}
L.~Le~Cam.
\newblock Limits of experiments.
\newblock In {\em Proceedings of the {S}ixth {B}erkeley {S}ymposium on
  {M}athematical {S}tatistics and {P}robability ({U}niv. {C}alifornia,
  {B}erkeley, {C}alif., 1970/1971), {V}ol. {I}: {T}heory of statistics}, pages
  245--261, 1972.

\bibitem{Miller1989}
Robert~W. Miller.
\newblock Parametric empirical bayes tolerance intervals.
\newblock {\em Technometrics}, 31(4):449--459, 1989.

\bibitem{Mukerjee2001}
Rahul Mukerjee and N.~Reid.
\newblock Second-order probability matching priors for a parametric function
  with application to bayesian tolerance limits.
\newblock {\em Biometrika}, 88(2):587--592, 2001.

\bibitem{DPRMSHong2014}
Dharini Pathmanathan, Rahul Mukerjee, and S.~H. Ong.
\newblock Two-sided bayesian and frequentist tolerance intervals: general
  asymptotic results with applications.
\newblock {\em Statistics}, 48(3):524--538, 2014.

\bibitem{DPSHOng2014}
Dharini Pathmanathan and S.~H. Ong.
\newblock A monte carlo simulation study of two-sided tolerance intervals in
  balanced one-way random effects model for non-normal errors.
\newblock {\em Journal of Statistical Computation and Simulation},
  84(11):2329--2344, 2014.

\bibitem{Gaurav2012}
Gaurav Sharma and Thomas Mathew.
\newblock One-sided and two-sided tolerance intervals in general mixed and
  random effects models using small-sample asymptotics.
\newblock {\em Journal of the American Statistical Association},
  107(497):258--267, 2012.

\bibitem{Merwe2007}
A.J. Van~der Merwe and Johan Hugo.
\newblock Bayesian tolerance intervals for the balanced two-factor nested
  random effects model.
\newblock {\em TEST}, 16(3):598--612, 2007.

\bibitem{Merwe2006}
A.J. Van~der Merwe, Albertus~L. Pertorius, and J.~H. Meyer.
\newblock Bayesian tolerance intervals for the unbalanced one-way random
  effects model.
\newblock {\em Journal of Quality Technology}, 38(3):280--293, 2006.

\bibitem{vanderVaart(1998)}
A.~van~der Vaart.
\newblock {\em Asymptotic {S}tatistics}, volume~3 of {\em Cambridge Series in
  Statistical and Probabilistic Mathematics}.
\newblock Cambridge University Press, Cambridge, 1998.

\bibitem{vanderVaartandWellner(1996)}
A.~van~der Vaart and J.~Wellner.
\newblock {\em Weak {C}onvergence and {E}mpirical {P}rocesses}.
\newblock Springer Series in Statistics. Springer-Verlag, New York, 1996.
\newblock With applications to statistics.

\bibitem{vandervaartART}
Aad van~der Vaart.
\newblock An asymptotic representation theorem.
\newblock {\em International Statistical Review / Revue Internationale de
  Statistique}, 59(1):97--121, 1991.

\bibitem{wolfinger}
Russell~D. Wolfinger.
\newblock Tolerance intervals for variance component models using bayesian
  simulation.
\newblock {\em Journal of quality technology}, 30(1):18--32, 1998.

\bibitem{Young2016}
Derek~S. Young, Charles~M. Gordon, Shihong Zhu, and Bryan~D. Olin.
\newblock Sample size determination strategies for normal tolerance intervals
  using historical data.
\newblock {\em Quality Engineering}, 28(3):337--351, 2016.

\end{thebibliography}

\end{document}